\documentclass[english,final,table]{IEEEtran}
\usepackage[T1]{fontenc}
\usepackage[latin9]{inputenc}
\usepackage{xcolor}
\usepackage{float}
\usepackage{mathrsfs}
\usepackage{bm}
\usepackage{amsmath}
\usepackage{amsthm}
\usepackage{amssymb}
\usepackage{graphicx}
\PassOptionsToPackage{normalem}{ulem}
\usepackage{ulem}

\makeatletter

\providecolor{lyxadded}{rgb}{0,0,1}
\providecolor{lyxdeleted}{rgb}{1,0,0}

\DeclareRobustCommand{\lyxsout}[1]{\ifx\\#1\else\sout{#1}\fi}

\theoremstyle{plain}
\newtheorem{lem}{\protect\lemmaname}
\theoremstyle{plain}
\newtheorem{prop}{\protect\propositionname}

\usepackage{amsmath,amssymb}
\usepackage[level=0]{wgroup_message}  

\usepackage{color}  
\usepackage{graphicx,psfrag,cite,subfigure}

\usepackage{flushend}

\interdisplaylinepenalty=2500




\author{
Wangqian Chen, Junting Chen and Shuguang Cui


\thanks{This work was supported in part by the National Science Foundation of China (NSFC) 
under Grant 62293482, in part by Basic Research Project under Grant HZQB-KCZYZ-2021067 of 
Hetao Shenzhen-HK S\&T Cooperation Zone, in part by NSFC Grant 62171398, 
in part by Shenzhen Science and Technology Program under Grant JCYJ20220530143804010, 
Grant KJZD20230923115104009, and Grant KQTD20200909114730003, 
in part by Guangdong Basic and Applied Basic Research Foundation 2024A1515011206, 
in part by Guangdong Research Grant 2019QN01X895, 
in part by the Guangdong Provincial Key Laboratory of Future Networks of Intelligence 
under Grant 2022B1212010001 and Guangdong-Hong Kong-Macao Joint Laboratory for Millimeter-Wave 
and Terahertz under Grant 2023B1212120002, in part by the National Key R\&D Program of China 
under Grant 2018YFB1800800, and in part by the Key Area R\&D Program of Guangdong Province 
under Grant 2018B030338001.
}

\thanks{W.~Chen, J.~Chen and S.~Cui are with the School of Science and Engineering, 
the Shenzhen Future Network of Intelligence Institute (FNii-Shenzhen), 
and the Guangdong Provincial Key Laboratory of Future Networks of Intelligence,
The Chinese University of Hong Kong, Shenzhen, Guangdong 518172, China 
(email: wangqianchen@link.cuhk.edu.cn; juntingc@cuhk.edu.cn; shuguangcui@cuhk.edu.cn).
}

}



\makeatletter
\makeatother

\usepackage[acronym]{glossaries}
\newcommand{\newac}{\newacronym}
\newcommand{\ac}{\gls}
\newcommand{\Ac}{\Gls}
\newcommand{\acpl}{\glspl}

\makeglossaries
\newac{speb}{SPEB}{square position error bound}
\newac[plural=EFIMs,firstplural=Fisher information matrices (EFIMs)]{efim}{EFIM}{Fisher information matrix}
\newac{ne}{NE}{Nash equilibrium}
\newac{mse}{MSE}{mean squared error}
\newac{toa}{TOA}{time-of-arrival}
\newac{snr}{SNR}{signal-to-noise ratio}
\newac{lan}{LAN}{local area network}
\newac{psd}{PSD}{positive semidefinite}
\newac{pd}{PD}{positive definite}
\newac{wrt}{w.r.t.}{with respect to}
\newac{lhs}{L.H.S.}{left hand side}
\newac{wp1}{w.p.1}{with probability 1}
\newac{kkt}{KKT}{Karush-Kuhn-Tucker}
\newac{wlog}{w.l.o.g.}{without loss of generality}
\newac{mle}{MLE}{maximum likelihood estimation}
\newac{gps}{GPS}{global positioning system}
\newac{rssi}{RSSI}{received signal strength indicator}
\newac{mimo}{MIMO}{multiple-input multiple-output}
\newac{csi}{CSI}{channel state information}
\newac{fdd}{FDD}{frequency division duplexing}
\newac{ms}{MS}{mobile station}
\newac{bs}{BS}{base station}
\newac{d2d}{D2D}{device-to-device}
\newac{slnr}{SLNR}{signal-to-interference-leakage-and-noise-ratio}
\newac{ula}{ULA}{uniform linear antenna array}
\newac{pas}{PAS}{power angular spectrum}
\newac{mmse}{MMSE}{minimum mean square error}
\newac{zf}{ZF}{zero-forcing}
\newac{rzf}{RZF}{regularized zero-forcing}
\newac{as}{AS}{angular spread}
\newac{aod}{AOD}{angle of departure}
\newac{iid}{i.i.d.}{independent and identically distributed} 
\newac{sinr}{SINR}{signal-to-interference-and-noise ratio}
\newac{tdd}{TDD}{time-division duplex}
\newac{rvq}{RVQ}{random vector quantization}
\newac{rhs}{R.H.S.}{right hand side}
\newac{mrc}{MRC}{maximum ratio combining}
\newac{cdf}{CDF}{cumulative distribution function}
\newac{a.s.}{a.s.}{almost surely}
\newac{los}{LOS}{line-of-sight}
\newac{jsdm}{JSDM}{joint spatial division and multiplexing}
\newac{map}{MAP}{maximum a posteriori}
\newac{klt}{KLT}{Karhunen-Lo\`eve Transform}
\newac{lbe}{LBE}{link bargaining equilibrium}
\newac{se}{SE}{Stackelberg equilibrium}
\newac{uav}{UAV}{unmanned aerial vehicle}
\newac{nlos}{NLOS}{non-line-of-sight}
\newac{pdf}{PDF}{probability density function}
\newac{em}{EM}{expectation-maximization}
\newac{knn}{KNN}{$k$-nearest neighbor}
\newac{svd}{SVD}{singular value decomposition}
\newac{nmf}{NMF}{non-negative matrix factorization}
\newac{umf}{UMF}{unimodality-constrained matrix factorization}
\newac{rmse}{RMSE}{rooted mean squared error}
\newac{olos}{OLOS}{obstructed line-of-sight}
\newac{mmw}{mmW}{millimeter wave}
\newac{ber}{BER}{bit error rate}
\newac{rss}{RSS}{received signal strength}
\newac{lp}{LP}{linear program}
\newac{ufw}{U-FW}{unimodal Frank-Wolfe}
\newac{utf}{UTF}{unimodality-constrained tensor factorization}
\newac{fw}{FW}{Frank-Wolfe}
\newac{iot}{IoT}{Internet-of-Things}
\newac{mae}{MAE}{mean absolute error}
\newac{crb}{CRB}{Cram\'er-Rao bound}
\newac{aoa}{AoA}{angle of arrival}
\newac{wcl}{WCL}{weighted centroid localization}


\usepackage{enumitem}

\usepackage{algorithmic}

\usepackage{multirow}
\usepackage{makecell}
\usepackage{color}

\newac{dl}{DL}{deep learning} 
\newac{rt}{RT}{ray tracing} 
\newac{nn}{NN}{neural network} 
\newac{tx}{TX}{transmitter} 
\newac{rx}{RX}{receiver} 

\makeatother

\usepackage{babel}
\providecommand{\lemmaname}{Lemma}
\providecommand{\propositionname}{Proposition}

\begin{document}
\title{Physics-Informed Neural Networks for MIMO Beam Map and Environment
Reconstruction}
\maketitle
\begin{abstract}
As communication networks evolve towards greater complexity (e.g.,
6G and beyond), a deep understanding of the wireless environment becomes
increasingly crucial. When explicit knowledge of the environment is
unavailable, geometry-aware feature extraction from \ac{csi} emerges
as a pivotal methodology to bridge physical-layer measurements with
network intelligence. This paper proposes to explore the \ac{rss}
data, without explicit 3D environment knowledge, to jointly construct
the radio beam map and environmental geometry for a \ac{mimo} system.
Unlike existing methods that only learn blockage structures, we propose
an oriented virtual obstacle model that captures the geometric features
of both blockage and reflection. Reflective zones are formulated to
identify relevant reflected paths according to the geometry relation
of the environment. We derive an analytical expression for the reflective
zone and further analyze its geometric characteristics to develop
a reformulation that is more compatible with deep learning representations.
A physics-informed deep learning framework that incorporates the reflective-zone-based
geometry model is proposed to learn the blockage, reflection, and
scattering components, along with the beam pattern, which leverages
physics prior knowledge to enhance network transferability. Numerical
experiments demonstrate that, in addition to reconstructing the blockage
and reflection geometry, the proposed model can construct a more accurate
\ac{mimo} beam map with a 32\%$-$48\% accuracy improvement.
\end{abstract}

\begin{IEEEkeywords}
\ac{mimo} radio map, channel modeling, beamforming, reflection characterization,
deep learning
\end{IEEEkeywords}

\section{Introduction}

\IEEEPARstart{M}{assive} \ac{mimo} transmission techniques have
enabled efficient spatial multiplexing, significant beamforming gain,
and flexible interference mitigation, which have found success in
5G networks and beyond. However, to achieve the full benefit of \ac{mimo}
systems, the high dimensional \ac{csi} is needed, which imposes significant
channel training overhead. In this context, \ac{mimo} beam maps are
constructed as charts or structured representations that capture the
spatial characteristics of the \ac{mimo} channel. Specifically, \ac{mimo}
beam maps may enable interference management and adaptive beamforming
without exhaustive channel training \cite{WeiXu:J22,FenLin:J24,ChiMas:J24},
and they have been exploited for optimizing beam alignment and signal
power allocation for energy efficiency enhancement \cite{XueJi:J24}.
However, while there have been some investigations on the exploitation
of \ac{mimo} beam maps, little is known in the literature on the
efficient construction of \ac{mimo} beam maps.

The fundamental challenge of \ac{mimo} beam map construction is the
demand of a large volume of measurement data. While there have been
some works on power spectrum map or radio environment map construction
using statistical methods \cite{HuZha:J20,DalRos:J22}, interpolation
\cite{PhiTonSic:C12,ZhaWan:J22}, tensor completion \cite{SunChe:J24,ZhaFu:J20}
and deep learning approaches \cite{KenSaiChe:J19,LevRonYap:J21,WuDanAi:20J},
these works mainly focused on constructing the \ac{rss} at each location
without considering the beam pattern of the antenna array. However,
constructing a \ac{mimo} beam map can be much more challenging than
constructing a power spectrum map. First, a \ac{mimo} beam map is
expected to capture higher dimensional \ac{csi} at each location
as opposed to conventional power maps that only capture the aggregate
power. Second, a \ac{mimo} beam map tends to experience larger spatial
variations, because the signal strength is affected {\em not only}
by the propagation environment, but also by the \ac{mimo} beam pattern.
Thus, following a conventional interpolation or tensor completion
type approach, more measurements will be needed. Third, the \ac{mimo}
beam map is more sensitive to the propagation environment, as it captures
received power across beam directions, which is much less homogeneous
as compared to conventional power maps.

\Ac{rt} methods have been adopted to construct \ac{mimo} beam maps
by simulating reflection, diffraction, and scattering \cite{HeAiKe:J19,MatMicShi:J20}.
While \ac{rt} provides detailed spatial and temporal information
about \ac{csi}, the cost of the detailed environmental geometry,
the computational complexity and the memory requirements are prohibitively
large. To accelerate computation, several studies leveraged the power
of neural networks to infer implicit \ac{csi} mappings directly from
raw \ac{rt} data \cite{AzpRaw:J14,RatChePaw:J21}. More end-to-end
neural network architectures, such as generative adversarial networks
\cite{ZhaWij:J23} or deep autoencoders \cite{TegRom:J21}, were proposed
for \ac{csi} learning. However, these methods still relied heavily
on detailed environmental information, such as precise city maps,
which limits their applicability in scenarios where the environmental
information is incomplete or unavailable. Channel charting was also
employed to learn low-dimensional charts of \ac{csi} while preserving
the local geometry of the channel \cite{kazHan:J23,DengTir:C21,StuMed:J18}.
However, recovering location labels for radio map construction remains
a non-trivial challenge due to the inherent complexity and ambiguity
in mapping \ac{csi} to physical coordinates.

To summarize, the existing deep learning approaches for \ac{mimo}
beam map construction lack a clear {\em physics model} and rely purely
on data, without explicitly exploiting the geometric relationship
between channel measurements and the environment. Consequently, it
is not clear whether the learned knowledge can be transferred to new
scenarios. Although it is possible to extend a power map to a \ac{mimo}
beam map with the \ac{mimo} steering vectors in \ac{los} scenarios,
predicting the directionality of beamforming gains in \ac{nlos} conditions
is challenging due to blockages and reflections. While classical methods
can construct \ac{mimo} beam maps independently for each individual
beam pattern, they may fail to capture inherent correlation among
beams, and thus the required amount of training data scales linearly
over the number of antennas, which is highly inefficient.

This paper attempts to build a deep learning model for \ac{mimo}
beam map construction embedded with explicit propagation model and
environment model. We propose to reconstruct the environmental geometry
as an intermediate step to assist for the \ac{mimo} beam map construction,
since all the beams of all \acpl{tx} share the same propagation environment.
Such a process resembles radio tomography \cite{HamBen:J14}, which
solves an inverse problem of reconstructing the environment from \ac{rss}
measurements. While our earlier attempts \cite{LiuChen:J23,WangqianChen:J24}
proposed virtual obstacles to capture the environmental geometry,
they are limited to reconstructing the blockage features, while reconstructing
the reflection characteristics remains a challenge. Furthermore, instead
of pursuing accuracy in environment reconstruction, our goal is to
learn a geometric representation of the environment shared among \acpl{tx}
for the best inference of the \ac{mimo} beam map.

Specifically, we address the challenges by leveraging blockage and
reflection geometry in neural network design to jointly construct
the \ac{mimo} beam map and virtual environment using only \ac{rss}
measurements. Towards this end, we propose to learn a set of {\em oriented virtual obstacles}
at possible locations, where each oriented virtual obstacle consists
of two attributes: height that determines whether a signal can be
blocked or reflected, and orientation that determines the direction
of signal reflection. By exploiting the geometric relationship between
paths and oriented virtual obstacles, a physics-informed model that
incorporates the beam pattern is proposed to capture the channel attenuation
of the direct path and reflected paths. To tackle the challenge of
solving an {\em inverse RT problem} that reconstructs oriented virtual
obstacles from \ac{rss} measurements, we address this complex inverse
problem by introducing the notion of {\em reflective zones} to identify
a subset of relevant propagation paths based on environmental geometry.
In addition, a model embedded neural network is developed to incorporate
the {\em reflective-zone-based geometry model} for a joint \ac{mimo}
beam map and oriented virtual obstacle construction. Our implementation
and experiments show that such a physics-informed design simplifies
the modeling complexity of the neural network, and shows a great transferability
to generate new \ac{mimo} beam maps in previously unseen scenarios.

\begin{figure}[!t]
\hspace{0.3cm}\includegraphics[scale=0.72]{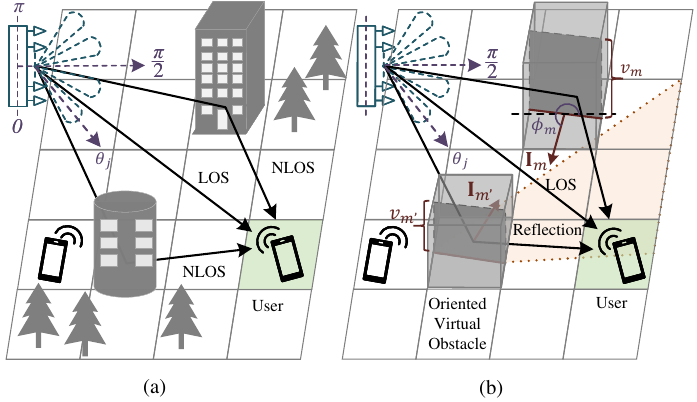}

\caption{a) A multipath propagation scenario with direct and single-bounce
reflection paths; b) Oriented virtual obstacles that describe the
geometry of the environment.}
\label{fig:Communication_Scene}
\end{figure}

The novelty and contribution are summarized as follows:
\begin{itemize}
\item We propose a physics-informed neural network architecture to {\em jointly}
construct the \ac{mimo} beam map and 3D environment from \ac{rss}
measurements. As the model explicitly learns the \ac{mimo} beam model,
the environment, and how the signal interacts with the environment,
it may reconstruct the \ac{mimo} beam maps much more efficiently.
\item We introduce the concept of reflective zones as deep learning representation
for signal reflection, and based on the geometry relation, we derive
simplified conditions that can be easily implemented in a deep learning
framework to characterize signal reflection.
\item Numerical experiments demonstrate that the proposed model achieves
over 30\% improvement in \ac{mimo} beam map accuracy compared to
existing deep learning based approaches, and exhibits superior extrapolation
capabilities to previously unseen scenarios.
\end{itemize}

In addition, we present an application of beam alignment where the
proposed method can reduce the search overhead by 78\% compared to
exhaustive beam sweeping.

The rest of the paper is organized as follows. Section II reviews
the system model, while Section III analyzes the reflective zone.
Section IV discusses the proposed learning framework. Design examples
are presented in Section V and conclusions are drawn in Section VI.

{\em Notations:} $(\cdot)^{\text{T}}$ and $(\cdot)^{\text{*}}$
represent the transpose and Hermitian transpose operations, respectively.
$\left|\cdot\right|$ denotes the absolute value and $\left\Vert \cdot\right\Vert $
represents the $L_{2}$ norm. $\overrightarrow{\boldsymbol{\mathbf{xy}}}$
denotes the vector from position $\mathbf{x}$ to position $\mathbf{y}$,
while $\mathbf{x\cdot y}$ represents the dot product of vectors $\mathbf{x}$
and $\mathbf{y}$, and therefore, $\mathbf{x\cdot y=x^{\text{T}}y}$.

\section{System Model}

Consider a massive \ac{mimo} communication system with multiple \acpl{bs}
in an outdoor urban environment. The \acpl{bs}, treated as \acpl{tx},
equip with multiple antennas, and all ground nodes, treated as \acpl{rx},
are single-antenna devices. For multi-antenna \acpl{rx}, they can
be considered as multiple co-located virtual single-antenna \acpl{rx}.
Denote a wireless link $\tilde{\mathbf{p}}=(\mathbf{p}_{\mathrm{t}},\mathbf{p}_{\mathrm{r}})\in\mathbb{R}^{6}$
using the position pair $\mathbf{p}_{\mathrm{t}},\mathbf{p}_{\mathrm{r}}\in\mathbb{R}^{3}$
of the \ac{tx} and the \ac{rx}, respectively.

The narrow-band channel of $\tilde{\mathbf{p}}$ can be expressed
as
\begin{equation}
\mathbf{h}(\tilde{\mathbf{p}})=\sum_{l=0}^{L}\beta_{l}\mathbf{a}^{*}(\phi_{l}(\tilde{\mathbf{p}}))\label{eq:channel_vector}
\end{equation}
where $\beta_{l}\sim\mathcal{CN}(0,\sigma_{l}^{2})$ is the complex
gain of the $l$th propagation path with zero mean and variance $\sigma_{l}^{2}$,
which captures the path loss, and $\mathbf{a}(\phi_{l})$ denotes
the array response vector for \ac{aod} $\phi_{l}$. For example,
in case of a \ac{ula}, the array response vector can be formulated
as
\begin{equation}
\mathbf{a}(\phi_{l})=\frac{1}{\sqrt{N_{t}}}[1,e^{-i2\pi\frac{\Delta}{\lambda}\textrm{cos}\phi_{l}},...,e^{-i2\pi\frac{\Delta}{\lambda}(N_{t}-1)\textrm{cos}\phi_{l}}]^{\textrm{T}}\label{eq:steering_vector}
\end{equation}
where $i=\sqrt{-1}$ is the imaginary unit, $\lambda$ is the carrier
wavelength, and $\Delta$ stands for the antenna spacing. 

A codebook-based approach at the \ac{bs} is applied to construct
\ac{mimo} beams for channel measurement. Denote $\mathbf{w}_{j}$
as the $j$th beamforming vector from a codebook $\mathcal{W}$ that
satisfies the power constrain with $|\mathbf{w}_{j}^{\ast}\mathbf{w}_{j}|=1$.
The beam pattern over a range of angle $\phi$ of the $j$th beam
is given by
\begin{equation}
B(\phi,\mathbf{w}_{j})=|\mathbf{a}^{*}(\phi)\mathbf{w}_{j}|^{2}.\label{eq:beam_pattern}
\end{equation}

It is assumed that the beamforming vectors $\mathbf{w}_{j}$ are designed
according to the antenna array geometry such that the beam pattern
$B(\phi,\mathbf{w}_{j})$ concentrates the energy in one specific
main-lobe direction $\theta_{j}$ and the beamforming gain decreases
as the angle deviates from the direction $\theta_{j}$ as shown in
Fig.~\ref{fig:Beam_Pattern}(a). It is also assumed that the number
of antennas $N_{t}$ at the BS is large, such that the sidelobe energy
is negligible compared to the measurement noise. Moreover, the collection
$\mathcal{W}$ of the beamforming vectors $\mathbf{w}_{j}$ are carefully
designed such that these beams cover the entire angular space as illustrated
in Fig.~\ref{fig:Beam_Pattern}(b).

An example under \ac{ula} is to adopt the discrete Fourier transform
(DFT) codebook. It is known that in the \ac{los} case under a large
$N_{t}$, the DFT matrix approximately diagonalizes the channel covariance
matrix, and hence, using DFT beams may obtain a good characterization
of the \ac{mimo} channel.

The measurement for the channel of $\tilde{\mathbf{p}}$ using beamforming
vector $\mathbf{w}_{j}$ is given by 
\begin{equation}
\rho(\tilde{\mathbf{p}},\mathbf{w}_{j})=\mathbb{E}\left[\left|\mathbf{h}(\tilde{\mathbf{p}})\mathbf{w}_{j}\right|^{2}\right]+n\label{eq:channel_gain_model}
\end{equation}
where $\mathbb{E}[\cdot]$ denotes the expectation over the randomness
from the small-scale fading, and $n$ denotes the measurement noise
and the uncertainty due to the small-scale fading.

For each location pair $\tilde{\mathbf{p}}$, a set of measurements
$\rho(\tilde{\mathbf{p}},\mathbf{w}_{j})$ are taken for $j=1,2,\dots,|\mathcal{W}|$,
where $|\mathcal{W}|$ denotes the number of beamforming vectors in
the codebook $\mathcal{W}$. The goal of the paper is to build a model
to generate \ac{mimo} beam maps $g(\tilde{\mathbf{p}},\mathbf{w}_{j})$
for all possible location pairs $\tilde{\mathbf{p}}$ over all beams
$\mathbf{w}_{j}\in\mathcal{W}$ from a set of limited measurements
\{$\rho(\tilde{\mathbf{p}},\mathbf{w}_{j}),\tilde{\mathbf{p}}$\}.

\begin{figure}[!t]
\centering\includegraphics[scale=0.32]{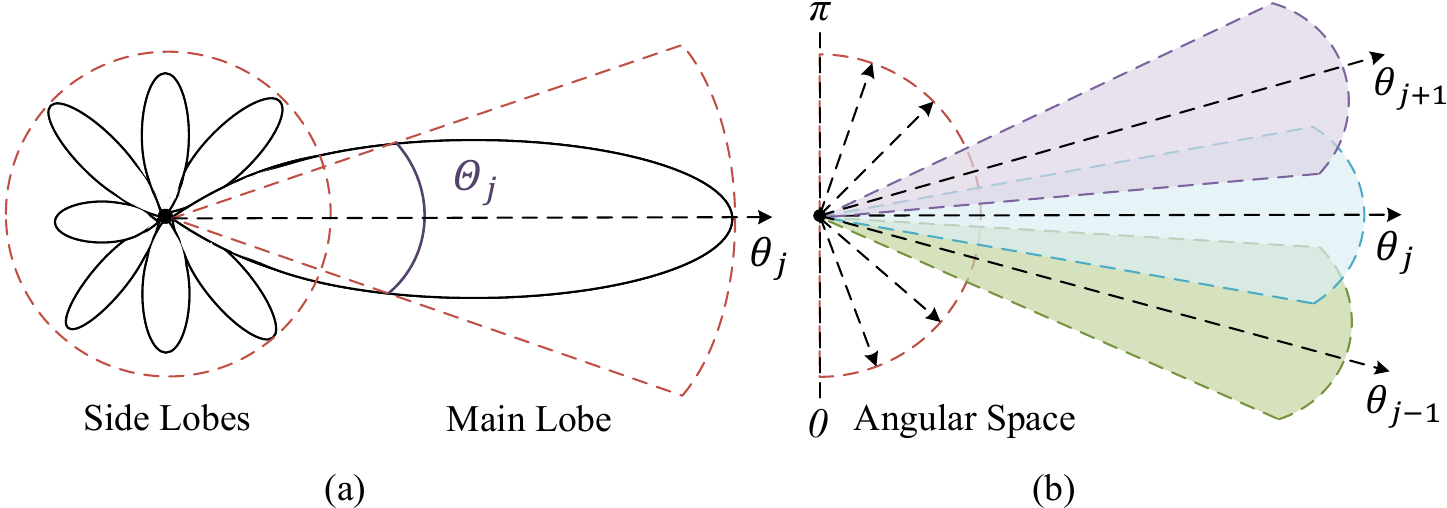}

\caption{a) The beam pattern studied consists of a main lobe and several sidelobes;
b) The beams are designed to collectively cover the angular space.}
\label{fig:Beam_Pattern}
\end{figure}

\subsection{Environment-Embedded Radio Map Model}

We adopt an oriented virtual obstacle model to describe the propagation
environment as follows. Partition the ground area of interest into
$M$ grid cells and denote $D$ as the grid spacing. The oriented
virtual obstacle on the $m$th grid cell is modeled as an oriented
cube of a certain height perpendicular to the ground, and the cube
may block the signal or reflect it to a specific direction based on
its orientation, as shown in Fig.~\ref{fig:Communication_Scene}.

Specifically, denote $v_{m}$ as the height of the virtual obstacle
and $\phi_{m}\in[0,2\pi)$ as the normal angle of its orientation
on the $m$th grid cell, respectively. The normal angle $\phi_{m}$
can also be captured by its normal vector $\mathbf{l}_{m}$, where
$\mathbf{l}_{m}=(\cos\phi_{m},\sin\phi_{m},\eta\text{)}$ and $\eta$
is a constant. Denote $\mathbf{V}_{\textrm{H}}$ as the matrix form
of $v_{m}$ such that the $(x,y)$th entry of $\mathbf{V}_{\textrm{H}}$
denotes the height at the $(x,y)$th grid cell. Similar notation $\mathbf{V_{\mathrm{\Phi}}}$
is defined for the normal angle $\phi_{m}$. The environmental geometry
is fully captured by the concatenated variable $\mathbf{V}=[\mathbf{V}_{\textrm{H}},\mathbf{V_{\mathrm{\Phi}}}]$.

Denote $\varphi(\boldsymbol{x},\boldsymbol{y})$ as the direction
angle of location $\boldsymbol{x}$ relative to location $\boldsymbol{y}$,
and $\mathbf{c}_{m}$ as the coordinates of the $m$th grid cell.
We formulate the first term in the measurement model (\ref{eq:channel_gain_model})
using an environment-embedded radio map model as
\begin{equation}
\begin{aligned}g(\tilde{\mathbf{p}},\mathbf{w}_{j};\mathbf{V}) & =B(\varphi(\mathbf{p}_{\mathrm{t}},\mathbf{p}_{\mathrm{r}}),\mathbf{w}_{j})g_{\mathrm{0}}(\tilde{\mathbf{p}},\mathbf{V})\\
 & \quad+\sum_{m}B(\varphi(\mathbf{p}_{\mathrm{t}},\mathbf{c}_{m}),\mathbf{w}_{j})g_{\mathrm{r}}^{m}(\tilde{\mathbf{p}},\mathbf{w}_{j},\mathbf{V})\\
 & \quad+g_{\mathrm{s}}(\tilde{\mathbf{p}},\mathbf{w}_{j},\mathbf{V})
\end{aligned}
\label{eq:radio-map-model}
\end{equation}
where the three terms characterize the following phenomenon:
\begin{itemize}
\item Direct path: In the first term, the function $B(\varphi(\mathbf{p}_{\mathrm{t}},\mathbf{p}_{\mathrm{r}}),\cdot)$
captures the beam pattern for the direct path from $\mathbf{p}_{\mathrm{t}}$
to $\mathbf{p}_{\mathrm{r}}$, and the function $g_{0}(\cdot)$ models
the path gain of the direct path based on the geometry of the environment
$\mathbf{V}$.
\item Reflected paths: All virtual obstacles have the potential to reflect
the signals from the \ac{tx}, and some of them reach the \ac{rx}.
Thus, the second term of (\ref{eq:radio-map-model}) aggregates contributions
from all possible reflected paths. The function $B(\varphi(\mathbf{p}_{\mathrm{t}},\mathbf{c}_{m}),\cdot)$
captures the beam pattern of the reflected path bounced by the $m$th
virtual obstacle, and the function $g_{\mathrm{r}}^{m}(\cdot)$ models
the reflected path gain.
\item Scattering paths: The third term $g_{\mathrm{s}}(\cdot)$ accounts
for the residual fluctuation due to scattering.
\end{itemize}

We further focus on designing simple representations for each component
to enable efficient learning of the model (\ref{eq:radio-map-model}).

\subsection{Blockage Model}

We establish a simplified representation for the relationship between
the direct path gain $g_{\mathrm{0}}(\tilde{\mathbf{p}},\mathbf{V})$
and the environment $\mathbf{V}$.

Denote $\mathcal{B}(\tilde{\mathbf{p}})$ as the set of grid cells
covered by the line segment joining $\mathbf{p}_{\mathrm{t}}$ and
$\mathbf{p}_{\mathrm{r}}$. For each grid cell $m\in\mathcal{B}(\tilde{\mathbf{p}})$,
denote $z_{m}(\tilde{\mathbf{p}})$ as the height of the line segment
of $\tilde{\mathbf{p}}$ that passes over the $m$th grid cell. Denote
$\mathcal{\tilde{D}}_{0}$ as the set of links $\tilde{\mathbf{p}}$
which are not blocked by any virtual obstacle, and $\mathcal{\tilde{D}}_{0}$
is termed as the \ac{los} region in this paper. Similarly, $\mathcal{\tilde{D}}_{1}$
denotes the \ac{nlos} region.

Mathematically, for an \ac{los} case $\tilde{\mathbf{p}}\in\mathcal{\tilde{D}}_{0}$,
we have $v_{m}<z_{m}(\tilde{\mathbf{p}})$ for all grid cells $m\in\mathcal{B}(\tilde{\mathbf{p}})$,
where recall that $v_{m}$ is the height of the virtual obstacle at
the $m$th grid cell; for a \ac{nlos} case $\tilde{\mathbf{p}}\in\mathcal{\tilde{D}}_{1}$,
we have $v_{m}\geq z_{m}(\tilde{\mathbf{p}})$ for at least one grid
cell $m\in\mathcal{B}(\tilde{\mathbf{p}})$, \emph{i.e.}, the direct
path from $\mathbf{p}_{\mathrm{t}}$ to $\mathbf{p}_{\mathrm{r}}$
is blocked by at least one of the virtual obstacles along the direct
path. Thus, a model for the \ac{los} region $\tilde{\mathcal{D}}_{0}(\mathbf{V}_{\textrm{H}})$
given the parameter $\mathbf{V}_{\textrm{H}}$ for the heights of
the virtual obstacles can be formulated as
\begin{equation}
\mathbb{I}\{\tilde{\mathbf{p}}\in\mathcal{\tilde{D}}_{0}(\mathbf{V}_{\textrm{H}})\}=\prod_{m\in\mathcal{B}(\tilde{\mathbf{p}})}\mathbb{I}\{v_{m}<z_{m}(\tilde{\mathbf{p}})\}\label{eq:virtual-obstacle-model}
\end{equation}
and the model for the \ac{nlos} region can also be obtained by $\mathbb{I}\{\tilde{\mathbf{p}}\in\mathcal{\tilde{D}}_{1}(\mathbf{V}_{\textrm{H}})\}=1-\mathbb{I}\{\tilde{\mathbf{p}}\in\mathcal{\tilde{D}}_{0}(\mathbf{V}_{\textrm{H}})\}$,
where $\mathbb{I}\{A\}$ is an indicator function that takes value
$1$ if condition $A$ is true, and $0$, otherwise. As a result,
the blockage-aware channel gain of the direct path of $\tilde{\mathbf{p}}$
is modeled as 
\begin{equation}
g_{\mathrm{0}}(\tilde{\mathbf{p}},\mathbf{V}_{\textrm{H}})=\sum_{k=\{0,1\}}f_{k}(d(\mathbf{p}_{\mathrm{t}},\mathbf{p}_{\mathrm{r}}))\mathbb{I}\{\tilde{\mathbf{p}}\in\tilde{\mathcal{D}}_{k}(\mathbf{V}_{\textrm{H}})\}\label{eq:LOS-component-1}
\end{equation}
where $f_{k}(\cdot)$ are the path gain functions for the \ac{los}
and \ac{nlos} channels, respectively, and $d(\mathbf{p}_{\mathrm{t}},\mathbf{p}_{\mathrm{r}})=\|\mathbf{p}_{\mathrm{t}}-\mathbf{p}_{\mathrm{r}}\|$
represents the distance between the \ac{tx} at $\mathbf{p}_{\mathrm{t}}$
and the \ac{rx} at $\mathbf{p}_{\mathrm{r}}$.

\subsection{Reflection Model}

We exploit the following two properties to construct a lightweight
reflection model with reduced parameters.

First, it practically suffices to consider the reflection from the
main lobe of the beam. Because from the beam pattern model (\ref{eq:beam_pattern})
as illustrated in Fig.~2, there is only one main lobe, and in a typical
massive \ac{mimo} scenario, the side lobes have substantially lower
power than that of the main lobe, which is already very small due
to reflection. Thus, for a link $\tilde{\mathbf{p}}$, the energy
from the reflection of side lobes can be ignored.

Second, in many scenarios, the paths with a single reflection dominate
the energy over paths with multiple reflections.\footnote{Exception scenarios may include tunnels that experience a canyon effect which acts like a waveguide that boosts the signals via multiple reflections. However, such scenarios occur infrequently and are not the focus of this paper.}
This is because paths with multiple reflections tend to travel over
a longer distance and each reflection loses energy.

Therefore, we propose to consider only the paths with a single reflection
and aligned with the main lobe.

\subsubsection{Main Lobe Area}

The main lobe area $\mathcal{\tilde{M}}(\mathbf{p}_{\mathrm{t}},\mathbf{w}_{j})$
is defined as the set of grid cells within the main lobe coverage
of the $j$th beam from the \ac{tx} at $\mathbf{p}_{\text{t}}$.
From the beam pattern model (\ref{eq:beam_pattern}), the main lobe
for the $j$th beamforming vector $\mathbf{w}_{j}$ is defined according
to the beam angle $\theta_{j}$ and the beamwidth $\Theta_{j}$ as
Fig.~\ref{fig:Beam_Pattern}(a).\footnote{In our implementation, we use 3 dB beamwidth $\Theta_j$, which refers to the angular range where the radiated power is within 3 dB of the maximum power.}
Thus, the main lobe area is formulated as
\begin{equation}
\mathbb{I}\{m\in\tilde{\mathcal{M}}(\mathbf{p}_{\mathrm{t}},\mathbf{w}_{j}))=\mathbb{I}\{|\varphi(\mathbf{p}_{\mathrm{t}},\mathbf{c}_{m})-\theta_{j}|\leq\Theta_{j}/2\}.
\end{equation}

\subsubsection{Reflective Zone}

The reflective zone $\tilde{\mathcal{Q}}_{m}(\mathbf{p}_{\mathrm{t}},\mathbf{V})$
of the virtual obstacle at the $m$th grid cell for the \ac{tx} at
$\mathbf{p}_{t}$ is defined as the region of \ac{rx} locations $\mathbf{p}_{\mathrm{r}}$
such that the signal emitted from the \ac{tx} location $\mathbf{p}_{t}$
can be received at \ac{rx} location $\mathbf{p}_{\mathrm{r}}$ via
a single specular reflection at the $m$th virtual obstacle as shown
in Fig.~\ref{reflection1}. It is clear that the reflective zone
depends on the orientation of the virtual obstacle captured by its
normal vector $\mathbf{l}_{m}$, which is a model parameter to be
learned.

Combining the notion of main lobe area and reflective zone, the reflection
via the $m$th virtual obstacle is modeled as
\begin{equation}
\begin{split}g_{\mathrm{r}}^{m}(\tilde{\mathbf{p}},\mathbf{w}_{j},\mathbf{V}) & =\mathbb{I}\{m\in\tilde{\mathcal{M}}(\mathbf{p}_{\mathrm{t}},\mathbf{w}_{j})\}\mathbb{I}\{\mathbf{p}_{\mathrm{r}}\in\tilde{\mathcal{Q}}_{m}(\mathbf{p}_{\mathrm{t}},\mathbf{V})\}\\
 & \quad\times f_{\textrm{r}}(d'(\tilde{\mathbf{p}},\mathbf{c}_{m}),\mathbf{V})
\end{split}
\label{eq:reflection model}
\end{equation}
where $d'\left(\tilde{\mathbf{p}},\mathbf{c}_{m}\right)=d(\mathbf{p}_{\mathrm{t}},\mathbf{c}_{m})+d(\mathbf{p}_{\mathrm{r}},\mathbf{c}_{m})$
represents the propagation distance of the reflected path. The first
condition $\mathbb{I}\{m\in\tilde{\mathcal{M}}(\mathbf{p}_{\mathrm{t}},\mathbf{w}_{j})\}$
restricts the reflected location to the main lobe, the second condition
$\mathbb{I}\{\mathbf{p}_{\mathrm{r}}\in\tilde{\mathcal{Q}}_{m}(\mathbf{p}_{\mathrm{t}},\mathbf{V})\}$
models the reflective zone, and the final term $f_{\textrm{r}}(\cdot)$
captures the path gain when both conditions are satisfied.

The remaining challenge is to design a model for the reflective zone
$\tilde{\mathcal{Q}}_{m}(\mathbf{p}_{\mathrm{t}},\mathbf{V})$. A
more explicit mechanism shall be constructed for $\tilde{\mathcal{Q}}_{m}(\mathbf{p}_{\mathrm{t}},\mathbf{V})$
such that it can be learned efficiently under a deep learning framework.

\subsection{Scattering Model}

While the energy from the direct path and prominent reflections is
captured by the first two terms in (\ref{eq:radio-map-model}), a
small residual may remain due to multiple reflections, diffraction,
and scattering. Explicitly modeling these fine-grained propagation
phenomena is highly challenging. To this end, a generic physical principle
is exploited, where the propagation mainly depends on the local environment
surrounding the \ac{tx} and \ac{rx}, since similar local geometry
can exhibit analogous propagation paths (in terms of direction and
quantity). As such, we adopt a model to map the local environment
to the residual scattering, and more specifically, we employ such
a scattering model from \cite{WangqianChen:J24} and extend it to
the MIMO beam scenario.

Define $\mathcal{B}_{\mathrm{s}}(\tilde{\mathbf{p}},\mathbf{V})$
as the set of grid cells within a local area around $\mathbf{p}_{\mathrm{t}}$
and $\mathbf{p}_{\mathrm{r}}$, where the area is defined as an ellipse
with these positions as foci under a given eccentricity ratio as illustrated
in \cite{WangqianChen:J24}. The scattering model is thus given as
\begin{equation}
\begin{split}g_{\mathrm{s}}(\tilde{\mathbf{p}},\mathbf{w}_{j},\mathbf{V})=f_{s}(\mathcal{B}_{\mathrm{s}}(\tilde{\mathbf{p}},\mathbf{V}),\mathbf{w}_{j})\end{split}
\label{eq:Scattering-component}
\end{equation}
where $f_{s}(\cdot)$ is a mapping that depends on the beam $\mathbf{w}_{j}$
and the local geometric features of $\mathbf{V}$. Furthermore, $f_{s}(\cdot)$
should be invariant to an absolute offset and rotation of the coordinate
system, such as the rotation and translation of $\mathcal{B}_{\mathrm{s}}(\tilde{\mathbf{p}},\mathbf{V})$.

Note that the design of the scattering model in \cite{WangqianChen:J24}
cannot capture the characteristics of MIMO beams, and therefore, the
remaining challenge is to incorporate beam directionality and the
associated beam selectivity into the model.

\section{Geometric Features of Reflective Zone}

In this section, we derive the expression for the reflective zone
condition $\mathbb{I}\{\mathbf{p}_{\mathrm{r}}\in\tilde{\mathcal{Q}}_{m}(\mathbf{p}_{\mathrm{t}},\mathbf{V})\}$.
Subsequently, we analyze its geometric features and develop a simplified
reformulation more compatible with deep learning representations.

\subsection{Analytical Characterization for the Reflective Zone}

We first derive an analytical expression for $\mathbb{I}\{\mathbf{p}_{\mathrm{r}}\in\tilde{\mathcal{Q}}_{m}(\mathbf{p}_{\mathrm{t}},\mathbf{V})\}$
in (\ref{eq:reflection model}) that determines whether a position
$\mathbf{p}_{\mathrm{r}}$ can receive a reflected signal emitted
from \ac{tx} position $\mathbf{p}_{\mathrm{t}}$ via the $m$th virtual
obstacle with a certain orientation.

Using the concept of virtual \ac{tx} in specular reflection, consider
to abstract the virtual obstacle as a rectangular board perpendicular
to the ground with a certain height as shown in Fig.~\ref{reflection1}
in a horizontal view. The orientation of the board is specified by
the normal vector $\mathbf{l}_{m}$. The reflection received at the
\ac{rx} is equivalent to a direct path from the virtual TX$^{m}$,
the mirror of the \ac{tx} with respect to the board, to the \ac{rx}.
As a result, the reflective zone can be understood as the shadow of
the rectangular board being illuminated by the virtual TX$^{m}$.

Mathematically, denote $\mathbf{p}_{\mathrm{t}}^{m}$ as the position
of the virtual \ac{tx} relative to the $m$th grid cell. The board
can be modeled using a 3D plane function as $\mathbf{l}_{m}\cdot(\mathbf{o}_{p}-\mathbf{c}_{m})=0$,
where $\mathbf{o}_{p}$ denotes the point on the board. Similarly,
the virtual direct path is modeled using a 3D line function as $\mathbf{o}_{l}=\mathbf{p}_{\mathrm{r}}+l(\mathbf{p}_{\textrm{t}}^{m}-\mathbf{p}_{\mathrm{r}})$,
where $\mathbf{o}_{l}$ is the point of the line and $l$ denotes
its scale factor. The intersection $\mathbf{o}_{m}$ of the two functions
is identified by 
\begin{equation}
\mathbf{o}_{m}=\mathbf{p}_{\mathrm{r}}+\frac{\mathbf{l}_{m}\cdot(\mathbf{c}_{m}-\mathbf{p}_{\mathrm{r}})}{\mathbf{l}_{m}\cdot(\mathbf{p}_{\textrm{t}}^{m}-\mathbf{p}_{\mathrm{r}})}(\mathbf{p}_{\textrm{t}}^{m}-\mathbf{p}_{\mathrm{r}}).\label{eq:int_point}
\end{equation}
To ensure that the intersection $\mathbf{o}_{m}$ falls within the
$m$th grid cell, we have the orientation conditions $\left|o_{m,i}-c_{m,i}\right|_{i=\{1,2\}}\leq D/2$
under $0\leq(\mathbf{l}_{m}\cdot(\mathbf{c}_{m}-\mathbf{p}_{\mathrm{r}}))/(\mathbf{l}_{m}\cdot(\mathbf{p}_{\textrm{t}}^{m}-\mathbf{p}_{\mathrm{r}}))\leq1$,
and the height condition $o_{m,3}\leq v_{m}$, where $o_{m,i}$ and
$c_{m,i}$ denote the $i$th element of $\mathbf{o}_{m}$ and $\mathbf{c}_{m}$,
respectively. Therefore, the reflective zone condition in (\ref{eq:reflection model})
can be formulated as
\begin{equation}
\begin{split}\mathbb{I}\{\mathbf{p}_{\mathrm{r}}\in\tilde{\mathcal{Q}}_{m}(\mathbf{p}_{\mathrm{t}},\mathbf{V})\}= & \prod_{i=\{1,2\}}\mathbb{I}\{|o{}_{m,i}-c_{m,i}|\leq D/2\}\\
 & \times\mathbb{I}\{0\leq\frac{\mathbf{l}_{m}\cdot(\mathbf{c}_{m}-\mathbf{p}_{\mathrm{r}})}{\mathbf{l}_{m}\cdot(\mathbf{p}_{\textrm{t}}^{m}-\mathbf{p}_{\mathrm{r}})}\leq1\}\\
 & \times\mathbb{I}\{o{}_{m,3}\leq v_{m}\}
\end{split}
\label{eq:reflective_zone_model}
\end{equation}

Although an explicit expression is found in (\ref{eq:reflective_zone_model}),
solving an inverse problem to learn the geometric parameters $v_{m}$
and $\mathbf{l}_{m}$ is still challenging. This is due to the nonlinear
expression for $\mathbf{l}_{m}$ in (\ref{eq:int_point}) and the
multiple indicator functions in (\ref{eq:reflective_zone_model}),
which easily lead to gradient vanishing or explosion in a deep neural
network. To this end, we further simplify (\ref{eq:reflective_zone_model})
by decomposing the terms for the height $v_{m}$ and the orientation
$\mathbf{l}_{m}$ as follows.

\begin{figure}
\centering \subfigure[]{\label{reflection1}\includegraphics[scale=0.45]{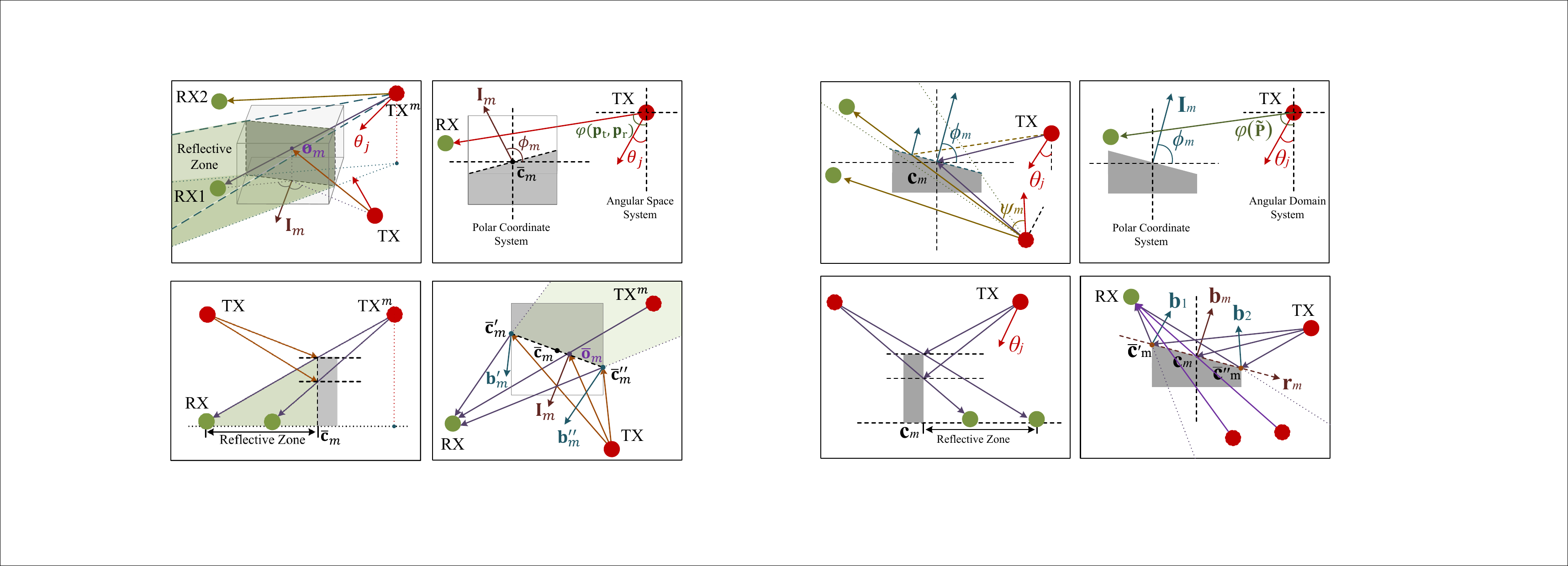}}
\subfigure[]{\label{reflection2} \includegraphics[scale=0.45]{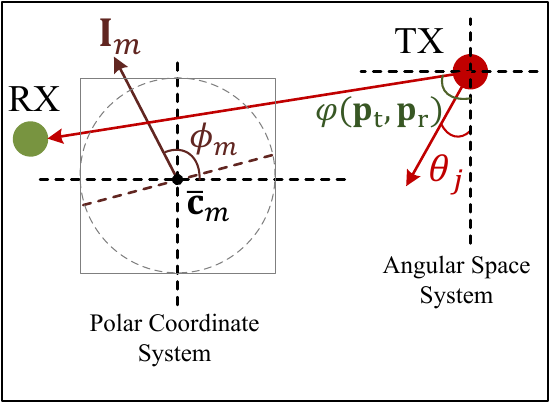}}

\subfigure[]{\label{reflection3}\includegraphics[scale=0.45]{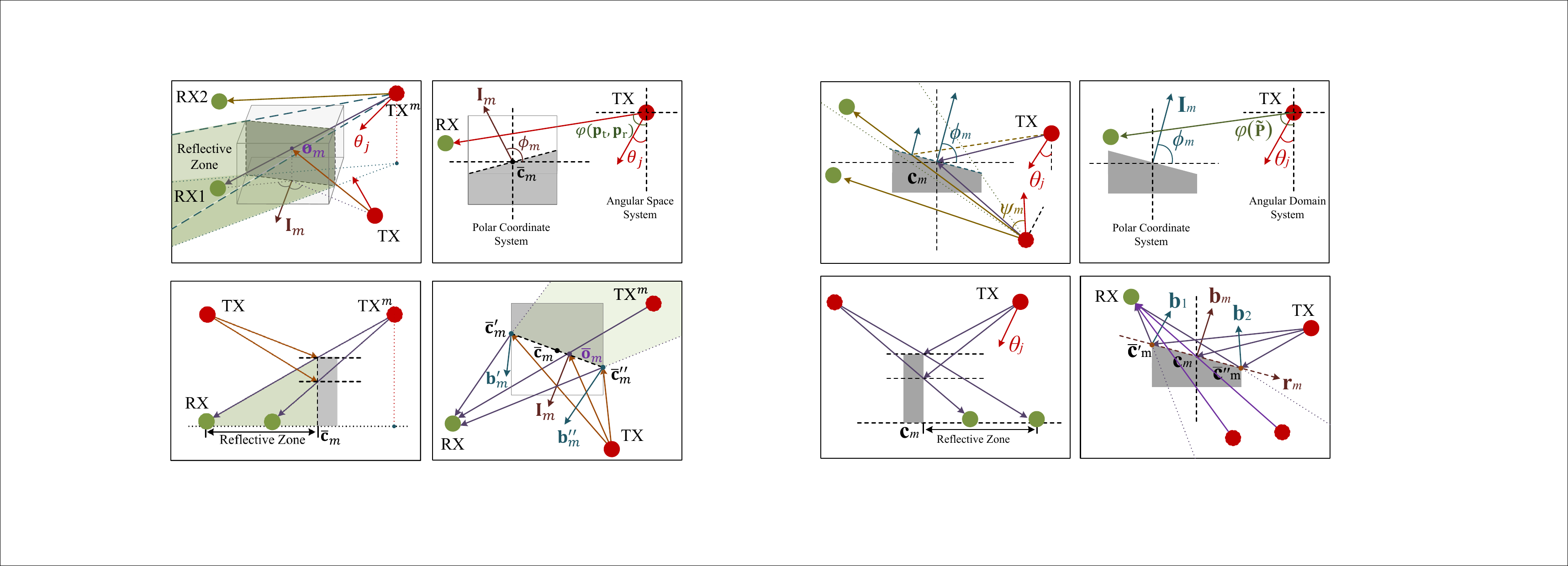}}
\subfigure[]{ \label{reflection4}\includegraphics[scale=0.45]{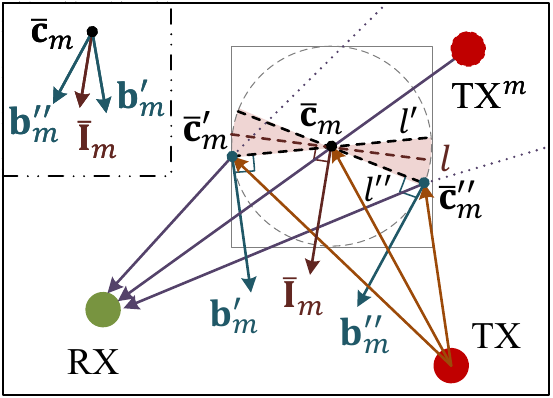}}\caption{a) The diagram of the reflective zone geometry; b) The relationship
between the polar coordinate system and the angular space; c) The
geometry of the height condition for the reflective zone; d) The geometry
of the orientation condition for the reflective zone.}
\label{fig:Reflection_model}
\end{figure}

\subsection{Height Condition for the Reflective Zone}

We show that the term $\mathbb{I}\{o{}_{m,3}\leq v_{m}\}$ in (\ref{eq:reflective_zone_model})
which couples the parameters $v_{m}$ and $\mathbf{l}_{m}$ via $\mathbf{o}_{m}$
in (\ref{eq:int_point}) can be simplified into an expression involving
only the obstacle heigh $v_{m}$.

Denote $\mathbf{c}_{m}=(\mathbf{\bar{c}}_{m},c_{m,3})$, where $\mathbf{\bar{c}}_{m}=(c_{m,1},c_{m,2})\in\mathbb{R}^{2}$
is the ground projected location. Similar notations $\mathbf{p}_{\mathrm{t}}=(\bar{\mathbf{p}}_{\textrm{t}},p_{\textrm{t,3}})$
and $\mathbf{p}_{\mathrm{r}}=(\bar{\mathbf{p}}_{\textrm{r}},p_{\textrm{r,3}})$
are defined for the TX and RX, respectively. By the law of reflection,
the locations $\mathbf{p}_{\textrm{t}}^{m}$, $\mathbf{c}_{m}$ and
$\mathbf{p}_{\mathrm{r}}$ are collinear as shown in Fig.~\ref{reflection3},
where the virtual obstacle in vertical direction is simplified as
a line segment that perpendicular to the ground. Denote $l_{m}=(\mathbf{l}_{m}\cdot(\mathbf{c}_{m}-\mathbf{p}_{\mathrm{r}}))/(\mathbf{l}_{m}\cdot(\mathbf{p}_{\textrm{t}}^{m}-\mathbf{p}_{\mathrm{r}}))$,
we can reformulate it as
\begin{equation}
\begin{aligned}l_{m} & =\left\Vert \mathbf{c}_{m}-\mathbf{p}_{\mathrm{r}}\right\Vert /\left\Vert \mathbf{p}_{\textrm{t}}^{m}-\mathbf{p}_{\mathrm{r}}\right\Vert \\
 & =d\left(\mathbf{\bar{c}}_{m},\mathbf{\bar{p}}_{\mathrm{r}}\right)/(d\left(\mathbf{\bar{c}}_{m},\mathbf{\bar{p}}_{\mathrm{r}}\right)+d\left(\mathbf{\bar{c}}_{m},\mathbf{\bar{p}}_{\mathrm{t}}\right))
\end{aligned}
\end{equation}
Taking this factor $l_{m}$ into (\ref{eq:int_point}), the term $\mathbb{I}\{o{}_{m,3}\leq v_{m}\}$
can be reformulated as $\mathbb{I}\{\widehat{v}_{m}(\tilde{\mathbf{p}})\leq v_{m}\}$,
where
\begin{equation}
\begin{split}\widehat{v}_{m}(\tilde{\mathbf{p}})=\frac{(p_{t,3}-p_{r,3})d\left(\mathbf{\bar{c}}_{m},\mathbf{\bar{p}}_{\mathrm{r}}\right)}{d\left(\mathbf{\bar{c}}_{m},\mathbf{\bar{p}}_{\mathrm{r}}\right)+d\left(\mathbf{\bar{c}}_{m},\mathbf{\bar{p}}_{\mathrm{t}}\right)}+p_{r,3}.\end{split}
\label{eq:Height Condition}
\end{equation}

Noted that the term $\mathbb{I}\{\widehat{v}_{m}(\tilde{\mathbf{p}})\leq v_{m}\}$
is decoupled from $\mathbf{l}_{m}$, as all parameters in (\ref{eq:Height Condition})
are given by the locations of the TX, RX and grid cell, which facilitates
preprocessing of the height condition for $v_{m}$ to accelerate both
training and inference.

\subsection{Orientation Condition for the Reflective Zone}

We further determine the geometric boundaries for $\mathbf{l}_{m}$
to ensure that the intersection $\bar{\mathbf{o}}_{m}=(o_{m,1},o_{m,2})$
satisfies the orientation conditions in (\ref{eq:reflective_zone_model}).

Denote $\mathbf{l}_{m}=(\bar{\mathbf{l}}_{m},\bar{I}_{m,3})$, where
$\bar{\mathbf{l}}_{m}=(\bar{I}_{m,1},\bar{I}_{m,2})\in\mathbb{R}^{2}$
is the ground projected vector. Recall that the virtual obstacle is
abstracted into a rectangular board, which is represented as a line
segment passing through $\mathbf{\bar{c}}{}_{m}$ in a top view as
shown in Fig.~\ref{reflection4}, and it is perpendicular to $\bar{\mathbf{l}}_{m}$.
Denote $\mathbf{\bar{c}}{}_{m}'$ and $\mathbf{\bar{c}}{}_{m}''$
as the two endpoints of the rectangular board shown in Fig.~\ref{reflection4},
and therefore, the reflection point $\bar{\mathbf{o}}_{m}$ must locate
in the line segment between the two endpoints $\mathbf{\bar{c}}{}_{m}'$
and $\mathbf{\bar{c}}{}_{m}''$. Define $\mathbf{b}_{m}'(\tilde{\mathbf{p}})$
as the normal vector of the virtual board when the reflection point
locates at $\mathbf{\bar{c}}{}_{m}'$. It follows that
\begin{equation}
(\mathbf{\bar{c}}{}_{m}'-\mathbf{\bar{c}}{}_{m})\cdot\mathbf{b}_{m}'(\tilde{\mathbf{p}})=0.\label{eq:boundary vectors-1}
\end{equation}
Then, according to the law of specular reflection, the incident angle
$\angle(\mathbf{b}_{m}'(\tilde{\mathbf{p}}),\overrightarrow{\mathbf{\bar{c}}{}_{m}'\mathbf{\bar{p}}_{\textrm{r}}})$
equals to the emergence angle $\angle(\mathbf{b}_{m}'(\tilde{\mathbf{p}}),\overrightarrow{\mathbf{\bar{c}}{}_{m}'\mathbf{\bar{p}}_{\textrm{t}}})$,
leading to the relation
\begin{equation}
\small\begin{split}\mathbf{b}_{m}'(\tilde{\mathbf{p}}) & =(\mathbf{\bar{p}}_{\textrm{r}}-\mathbf{\bar{c}}{}_{m}')/\left\Vert \mathbf{\bar{p}}_{\textrm{r}}-\mathbf{\bar{c}}{}_{m}'\right\Vert +(\mathbf{\bar{p}}_{\textrm{t}}-\mathbf{\bar{c}}{}_{m}')/\left\Vert \mathbf{\bar{p}}_{\textrm{t}}-\mathbf{\bar{c}}{}_{m}'\right\Vert \end{split}
.\label{eq:boundary vectors}
\end{equation}

Similarly, define $\mathbf{b}_{m}''(\tilde{\mathbf{p}})$ as the normal
vector of the virtual board when the point $\bar{\mathbf{o}}_{m}$
locates at $\mathbf{\bar{c}}{}_{m}''$. Intuitively, when the point
$\bar{\mathbf{o}}_{m}$ locates on the line segment between $\mathbf{\bar{c}}{}_{m}'$
and $\mathbf{\bar{c}}{}_{m}''$, the normal vector $\bar{\mathbf{l}}_{m}$
lies in the convex hull of $\mathbf{b}_{m}'(\tilde{\mathbf{p}})$
and $\mathbf{b}_{m}''(\tilde{\mathbf{p}})$. This result is formally
proven in the following lemma.
\begin{lem}
\label{thm:Lemma}If there is a path $\tilde{\mathbf{p}}$ reflected
by the $m$th obstacle, the vectors $\mathbf{b}_{m}'(\tilde{\mathbf{p}})$,
$\mathbf{b}_{m}''(\tilde{\mathbf{p}})$ and the normal vector $\bar{\mathbf{l}}_{m}$
satisfy
\begin{equation}
\angle(\mathbf{b}_{m}'(\tilde{\mathbf{p}}),\bar{\mathbf{l}}_{m})+\angle(\mathbf{b}_{m}''(\tilde{\mathbf{p}}),\bar{\mathbf{l}}_{m})=\angle(\mathbf{b}_{m}'(\tilde{\mathbf{p}}),\mathbf{b}_{m}''(\tilde{\mathbf{p}}))\label{eq:horizontal_Condition}
\end{equation}
where $\angle(\boldsymbol{x},\boldsymbol{y})=\arccos\left((\boldsymbol{x}\cdot\boldsymbol{y}\mathbf{)}/(\left\Vert \boldsymbol{x}\right\Vert \left\Vert \boldsymbol{y}\right\Vert )\right)$
represents the angle between the two vectors $\boldsymbol{x}$ and
$\boldsymbol{y}$.
\end{lem}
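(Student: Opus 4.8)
The plan is to collapse the statement into a one‑dimensional monotonicity fact: the normal direction required for a specular reflection rotates monotonically as the reflection point slides along the virtual board. First I would work entirely in the ground (top‑view) plane and fix a coordinate frame in which the line carrying the board (through $\bar{\mathbf{c}}_m$, $\bar{\mathbf{c}}_m'$, $\bar{\mathbf{c}}_m''$, orthogonal to $\bar{\mathbf{l}}_m$) is the $x$‑axis, writing $\bar{\mathbf{p}}_{\mathrm{t}}=(a,h_{\mathrm{t}})$ and $\bar{\mathbf{p}}_{\mathrm{r}}=(b,h_{\mathrm{r}})$. The hypothesis that $\tilde{\mathbf{p}}$ is actually reflected by the $m$th obstacle buys us two things. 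It forces $\bar{\mathbf{p}}_{\mathrm{t}}$ and $\bar{\mathbf{p}}_{\mathrm{r}}$ to lie on the same side of the board line — otherwise the virtual transmitter $\bar{\mathbf{p}}_{\mathrm{t}}^m$ and $\bar{\mathbf{p}}_{\mathrm{r}}$ sit on the same side and the virtual direct path of (\ref{eq:int_point}) never meets the board — so, reflecting the frame if necessary, $h_{\mathrm{t}},h_{\mathrm{r}}>0$. It also forces the true reflection point $\bar{\mathbf{o}}_m$ (the unique point of the line at which the bisector condition below holds) to lie on the segment $[\bar{\mathbf{c}}_m',\bar{\mathbf{c}}_m'']$, at some $x$‑coordinate $x_o$ between the coordinates $x'$ and $x''$ of $\bar{\mathbf{c}}_m'$ and $\bar{\mathbf{c}}_m''$.

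Next I would parametrize a candidate reflection point as $\mathbf{q}(x)=(x,0)$ and analyze the bisector direction at it. The unit vectors from $\mathbf{q}(x)$ toward $\bar{\mathbf{p}}_{\mathrm{t}}$ and $\bar{\mathbf{p}}_{\mathrm{r}}$ make signed angles $\theta_{\mathrm{t}}(x)=\arctan\frac{a-x}{h_{\mathrm{t}}}$ and $\theta_{\mathrm{r}}(x)=\arctan\frac{b-x}{h_{\mathrm{r}}}$ with the $y$‑axis, both lying in $(-\pi/2,\pi/2)$ since $h_{\mathrm{t}},h_{\mathrm{r}}>0$. The elementary identity I would invoke is that the sum of two unit vectors at angles $\theta_{\mathrm{t}},\theta_{\mathrm{r}}$ equals $2\cos\frac{\theta_{\mathrm{t}}-\theta_{\mathrm{r}}}{2}$ times the unit vector at angle $\beta(x):=\frac{\theta_{\mathrm{t}}(x)+\theta_{\mathrm{r}}(x)}{2}$, and since $|\theta_{\mathrm{t}}-\theta_{\mathrm{r}}|<\pi$ that scalar is positive, so the sum points in direction $\beta(x)$. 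By the very definition of $\mathbf{b}_m'$ and $\mathbf{b}_m''$ in (\ref{eq:boundary vectors}), this sum is $\mathbf{b}_m'(\tilde{\mathbf{p}})$ at $x=x'$ and $\mathbf{b}_m''(\tilde{\mathbf{p}})$ at $x=x''$; and the law of reflection at the true reflection point identifies $\bar{\mathbf{l}}_m$ as a positive multiple of the same sum at $x=x_o$ (the orientation of $\bar{\mathbf{l}}_m$ toward the $\bar{\mathbf{p}}_{\mathrm{t}}$–$\bar{\mathbf{p}}_{\mathrm{r}}$ side is exactly what makes the reflection physically possible, so it is $\bar{\mathbf{l}}_m$ and not $-\bar{\mathbf{l}}_m$). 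Hence $\mathbf{b}_m'$, $\bar{\mathbf{l}}_m$, $\mathbf{b}_m''$ are directed at the angles $\beta(x')$, $\beta(x_o)$, $\beta(x'')$.

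It then remains to use monotonicity. Since $\arctan$ is increasing while $x\mapsto (a-x)/h_{\mathrm{t}}$ and $x\mapsto (b-x)/h_{\mathrm{r}}$ are decreasing, $\theta_{\mathrm{t}}$ and $\theta_{\mathrm{r}}$, and therefore $\beta$, are strictly decreasing in $x$. Because $x_o$ lies between $x'$ and $x''$, the value $\beta(x_o)$ lies between $\beta(x')$ and $\beta(x'')$, so the three angular gaps satisfy $|\beta(x')-\beta(x_o)|+|\beta(x_o)-\beta(x'')|=|\beta(x')-\beta(x'')|$. Finally, all three of $\beta(x'),\beta(x_o),\beta(x'')$ lie in $(-\pi/2,\pi/2)$, so every pairwise difference lies in $(-\pi,\pi)$, and hence $\angle(\cdot,\cdot)$, being the $\arccos$ of the cosine of that difference, equals the absolute value of the difference; substituting the three identifications converts the displayed chain of equalities into (\ref{eq:horizontal_Condition}).

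I expect the main obstacle to be the orientation bookkeeping rather than the core geometry. One has to confirm carefully that the hypothesis genuinely places $\bar{\mathbf{o}}_m$ inside the segment $[\bar{\mathbf{c}}_m',\bar{\mathbf{c}}_m'']$ and on the illuminated side of the board, so that $\bar{\mathbf{l}}_m$ is the positive (not the negative) multiple of the bisector, and one should dispose of the degenerate configurations in which $\bar{\mathbf{p}}_{\mathrm{t}}$ or $\bar{\mathbf{p}}_{\mathrm{r}}$ lies on the board's line (no reflection occurs, so the hypothesis is vacuous). Once the ``sum of two unit vectors'' identity and the strict monotonicity of $\beta$ are in hand, everything else is a two‑line computation.
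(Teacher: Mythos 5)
Your monotonicity argument is internally coherent, but it proves a different statement from the one the paper defines, because you have placed $\bar{\mathbf{c}}_{m}'$ and $\bar{\mathbf{c}}_{m}''$ on the line of the \emph{actual} board, i.e., the line through $\bar{\mathbf{c}}_{m}$ orthogonal to $\bar{\mathbf{l}}_{m}$. That reading is incompatible with the defining relation (\ref{eq:boundary vectors-1}): if $\bar{\mathbf{c}}_{m}'-\bar{\mathbf{c}}_{m}$ were orthogonal to $\bar{\mathbf{l}}_{m}$, then $(\bar{\mathbf{c}}_{m}'-\bar{\mathbf{c}}_{m})\cdot\mathbf{b}_{m}'=0$ would force $\mathbf{b}_{m}'$ to be parallel to $\bar{\mathbf{l}}_{m}$ (in the ground plane the orthogonal complement of a nonzero vector is one-dimensional), contradicting the bisector formula (\ref{eq:boundary vectors}) except in degenerate configurations. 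In the intended construction, every candidate board passes through the cell center $\bar{\mathbf{c}}_{m}$, and $\bar{\mathbf{c}}_{m}'$, $\bar{\mathbf{c}}_{m}''$ are the two critical reflection points at the cell-boundary radius obtained by rotating the board about $\bar{\mathbf{c}}_{m}$ until the reflection point reaches that radius; $\mathbf{b}_{m}'$, $\mathbf{b}_{m}''$ are the normals of those two rotated boards, which is exactly why they satisfy both (\ref{eq:boundary vectors-1}) and (\ref{eq:boundary vectors}) and why the paper can assert---and Proposition~\ref{thm:Proposed1} together with the implementation (\ref{eq:Ref_Zone_NN}) crucially exploit---that they are independent of the trainable $\bar{\mathbf{l}}_{m}$ and precomputable per link. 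In your version the endpoints, and hence your $\mathbf{b}_{m}'$, $\mathbf{b}_{m}''$, rotate together with $\bar{\mathbf{l}}_{m}$, so what you establish is that $\bar{\mathbf{l}}_{m}$ lies between two vectors that are themselves functions of $\bar{\mathbf{l}}_{m}$: a true fact, but not Lemma~\ref{thm:Lemma} as defined and as used downstream.

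The underlying mechanism you invoke is nevertheless the same as the paper's, only parametrized differently: the paper rotates the board about $\bar{\mathbf{c}}_{m}$ and shows, using that all virtual TXs lie on a common circle centered at $\bar{\mathbf{c}}_{m}$, that the reflection point's distance $\|\bar{\mathbf{o}}_{m}-\bar{\mathbf{c}}_{m}\|$ increases monotonically once the normal leaves the cone spanned by $\mathbf{b}_{m}'$ and $\mathbf{b}_{m}''$, so the reflection point exits the cell and no reflected path exists; an angle-chasing step then yields the additivity (\ref{eq:horizontal_Condition}) when the normal lies inside the cone. To repair your argument, keep the bisector-sum identity and the sign/side bookkeeping (which are fine), but replace the parametrization ``reflection point sliding along a fixed line'' by ``board rotating about $\bar{\mathbf{c}}_{m}$'': prove that the signed offset of $\bar{\mathbf{o}}_{m}$ from $\bar{\mathbf{c}}_{m}$ is monotone in the rotation angle, identify the two orientations at which this offset hits the critical radius with the paper's $\mathbf{b}_{m}'$ and $\mathbf{b}_{m}''$, and then conclude the angular sandwich for $\bar{\mathbf{l}}_{m}$ exactly as you do now.
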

\begin{proof}
See Appendix~\ref{app:Lemma 1 proof}.
\end{proof}
By Lemma~\ref{thm:Lemma}, the orientation conditions in (\ref{eq:reflective_zone_model})
can be directly verified using the conclusion in (\ref{eq:horizontal_Condition}),
thereby bypassing the need for the calculation in (\ref{eq:int_point}).
Meanwhile, as $\mathbf{b}_{m}'(\tilde{\mathbf{p}})$ and $\mathbf{b}_{m}''(\tilde{\mathbf{p}})$
are independent of the trainable variables $v_{m}$ and $\bar{\mathbf{l}}_{m}$,
they are constants that can be precomputed before training.

Using the auxiliary vectors $\widehat{v}_{m}(\tilde{\mathbf{p}})$,
$\mathbf{b}_{m}'(\tilde{\mathbf{p}})$ and $\mathbf{b}_{m}''(\tilde{\mathbf{p}})$,
we now can derive a simplified expression for characterizing the reflective
zone (\ref{eq:reflective_zone_model}) as follows.
\begin{prop}
\label{thm:Proposed1}Equation (\ref{eq:reflective_zone_model}) can
be simplified to
\begin{equation}
\begin{split}\mathbb{I}\{\mathbf{p}_{\mathrm{r}}\in\tilde{\mathcal{Q}}_{m}(\mathbf{p}_{\mathrm{t}},\mathbf{V})\} & =\mathbb{I}\{\bar{\mathbf{l}}_{m}\cdot\mathbf{\widehat{b}}_{m}(\tilde{\mathbf{p}})\geq\mathbf{b}_{m}'(\tilde{\mathbf{p}})\cdot\mathbf{\widehat{b}}_{m}(\tilde{\mathbf{p}})\}\\
 & \quad\times\mathbb{I}\{v_{m}\geq\widehat{v}_{m}(\tilde{\mathbf{p}})\}
\end{split}
\label{eq:reflective_zone_model-r1}
\end{equation}
where $\mathbf{\widehat{b}}_{m}(\tilde{\mathbf{p}})=(\mathbf{b}_{m}'(\tilde{\mathbf{p}})/\left\Vert \mathbf{b}_{m}'(\tilde{\mathbf{p}})\right\Vert +\mathbf{b}_{m}''(\tilde{\mathbf{p}})/\left\Vert \mathbf{b}_{m}''(\tilde{\mathbf{p}})\right\Vert )/2$
denotes the bisector vector of $\mathbf{b}_{m}'(\tilde{\mathbf{p}})$
and $\mathbf{b}_{m}''(\tilde{\mathbf{p}})$.
\end{prop}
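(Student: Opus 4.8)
The plan is to split the product of indicator functions in (\ref{eq:reflective_zone_model}) into three groups and dispose of them one at a time: the height indicator $\mathbb{I}\{o_{m,3}\le v_m\}$, the two in-cell indicators $\prod_{i=\{1,2\}}\mathbb{I}\{|o_{m,i}-c_{m,i}|\le D/2\}$, and the ``forward-reflection'' indicator $\mathbb{I}\{0\le l_m\le 1\}$ with $l_m=(\mathbf{l}_m\cdot(\mathbf{c}_m-\mathbf{p}_{\mathrm{r}}))/(\mathbf{l}_m\cdot(\mathbf{p}_{\mathrm{t}}^{m}-\mathbf{p}_{\mathrm{r}}))$. The height indicator is already handled: substituting this $l_m$ into (\ref{eq:int_point}) and invoking the collinearity of $\mathbf{p}_{\mathrm{t}}^{m}$, $\mathbf{c}_m$, $\mathbf{p}_{\mathrm{r}}$ gives $\mathbb{I}\{o_{m,3}\le v_m\}=\mathbb{I}\{\widehat{v}_m(\tilde{\mathbf p})\le v_m\}$ with $\widehat{v}_m$ as in (\ref{eq:Height Condition}), which is exactly the second factor of (\ref{eq:reflective_zone_model-r1}). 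So everything reduces to showing that the product of the remaining two groups of indicators equals the orientation indicator $\mathbb{I}\{\bar{\mathbf l}_m\cdot\widehat{\mathbf b}_m(\tilde{\mathbf p})\ge\mathbf b_m'(\tilde{\mathbf p})\cdot\widehat{\mathbf b}_m(\tilde{\mathbf p})\}$.

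First I would pass to the top-view picture of Fig.~\ref{reflection4}, in which the virtual board is the chord cut out of the axis-aligned cell by the line through $\bar{\mathbf c}_m$ normal to $\bar{\mathbf l}_m$, with endpoints $\bar{\mathbf c}_m'$, $\bar{\mathbf c}_m''$. Since the ground-projected intersection $\bar{\mathbf o}_m$ lies on that line, the two in-cell indicators are jointly equivalent to $\bar{\mathbf o}_m\in[\bar{\mathbf c}_m',\bar{\mathbf c}_m'']$. A short sign computation, writing $a=\mathbf l_m\cdot(\mathbf p_{\mathrm{t}}-\mathbf c_m)$, $b=\mathbf l_m\cdot(\mathbf p_{\mathrm{r}}-\mathbf c_m)$ and using that $\mathbf p_{\mathrm{t}}^{m}$ is the mirror image of $\mathbf p_{\mathrm{t}}$ (so that $l_m=b/(a+b)$), shows that $0\le l_m\le 1$ holds precisely when $\mathbf p_{\mathrm{t}}$ and $\mathbf p_{\mathrm{r}}$ lie on the same side of the board, i.e.\ when the reflected ray crosses the physical board rather than its extension. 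Hence the three residual indicators jointly assert that the (unique) reflection point induced by the board normal $\bar{\mathbf l}_m$ is a genuine forward reflection point lying on the chord $[\bar{\mathbf c}_m',\bar{\mathbf c}_m'']$.

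Next I would convert this event into a statement about $\bar{\mathbf l}_m$ alone. For any candidate point $\bar{\mathbf o}$ on the chord, the law of reflection forces the board normal to point along the bisector of $\overrightarrow{\bar{\mathbf o}\bar{\mathbf p}_{\mathrm{r}}}$ and $\overrightarrow{\bar{\mathbf o}\bar{\mathbf p}_{\mathrm{t}}}$; at the two endpoints this bisector direction is $\mathbf b_m'$ (at $\bar{\mathbf c}_m'$) and $\mathbf b_m''$ (at $\bar{\mathbf c}_m''$), exactly the vectors in (\ref{eq:boundary vectors}). As $\bar{\mathbf o}$ slides monotonically from $\bar{\mathbf c}_m'$ to $\bar{\mathbf c}_m''$, the required normal direction turns continuously and monotonically from the direction of $\mathbf b_m'$ to that of $\mathbf b_m''$; therefore a forward reflection point on the chord exists if and only if $\bar{\mathbf l}_m$ lies in the convex cone spanned by $\mathbf b_m'$ and $\mathbf b_m''$. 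The necessity half of this equivalence follows from Lemma~\ref{thm:Lemma}, since the angle-additivity identity (\ref{eq:horizontal_Condition}) is exactly the test for membership in that planar cone; the sufficiency half is the monotone-sweep argument just sketched. Finally, cone membership is equivalent to $\angle(\bar{\mathbf l}_m,\widehat{\mathbf b}_m)\le\angle(\mathbf b_m',\widehat{\mathbf b}_m)$, where $\widehat{\mathbf b}_m$ is the bisector of $\mathbf b_m'$ and $\mathbf b_m''$; taking cosines (both angles lie in $[0,\pi]$, $\bar{\mathbf l}_m$ is unit, and $\widehat{\mathbf b}_m$ is common to both sides) turns it into the linear inequality $\bar{\mathbf l}_m\cdot\widehat{\mathbf b}_m\ge\mathbf b_m'\cdot\widehat{\mathbf b}_m$, and multiplying this indicator by the height factor gives (\ref{eq:reflective_zone_model-r1}).

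The hard part will be the sufficiency direction of the cone characterization: Lemma~\ref{thm:Lemma} only supplies necessity, so I must establish the monotone sweep of the reflection point and, in particular, verify that whenever $\bar{\mathbf l}_m$ lies in the cone the intersection $\bar{\mathbf o}_m$ produced by (\ref{eq:int_point}) indeed lands in $[\bar{\mathbf c}_m',\bar{\mathbf c}_m'']$ \emph{and} automatically satisfies $0\le l_m\le 1$. I also need to record the mild non-degeneracy assumptions under which the angle-to-cosine step is valid (the chord is nonempty and $\angle(\mathbf b_m',\mathbf b_m'')<\pi$), which hold as long as $\bar{\mathbf p}_{\mathrm{t}}$ and $\bar{\mathbf p}_{\mathrm{r}}$ are not separated by the board line.
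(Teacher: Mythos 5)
Your proposal follows essentially the same route as the paper's own proof: the height factor is taken directly from (\ref{eq:Height Condition}), the in-cell and forward-reflection indicators are reduced via Lemma~\ref{thm:Lemma} to membership of $\bar{\mathbf{l}}_{m}$ in the cone spanned by $\mathbf{b}_{m}'$ and $\mathbf{b}_{m}''$, and cone membership is converted into the inequality $\bar{\mathbf{l}}_{m}\cdot\mathbf{\widehat{b}}_{m}\geq\mathbf{b}_{m}'\cdot\mathbf{\widehat{b}}_{m}$ by comparing angles to the bisector and taking cosines of acute angles. The sufficiency direction you single out as the hard part is indeed stated in Lemma~\ref{thm:Lemma} only as an implication and is not argued separately in the paper's proposition proof, which implicitly relies on the monotone-rotation argument inside the proof of Lemma~\ref{thm:Lemma} (the reflection point drifting outside the cell as the normal leaves the cone); your monotone-sweep step simply makes that implicit converse explicit rather than introducing a different method.
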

\begin{proof}
See Appendix~\ref{app:Proposition 1 proof}.
\end{proof}
It is observed that Proposition~\ref{thm:Proposed1} has substantially
simplified the expression of the reflective zone. First, the computation
of $\mathbf{o}_{m}$ in (\ref{eq:int_point}) is not required and
the number of indicator functions has been reduced, which helps reduce
the risk of gradient vanishing or explosion. Second, the parameters
$v_{m}$ and $\mathbf{l}_{m}$ are separated in the two terms in (\ref{eq:reflective_zone_model-r1}),
and thus, the coupling has been reduced. Third, the condition parameters
$\mathbf{b}_{m}'(\tilde{\mathbf{p}})$, $\mathbf{\widehat{b}}_{m}(\tilde{\mathbf{p}})$
and $\widehat{v}_{m}(\tilde{\mathbf{p}})$ in (\ref{eq:reflective_zone_model-r1})
are all known for the link $\tilde{\mathbf{p}}$ and can be precomputed
to accelerate training.

\begin{figure*}[!t]
\centering \includegraphics[scale=0.62]{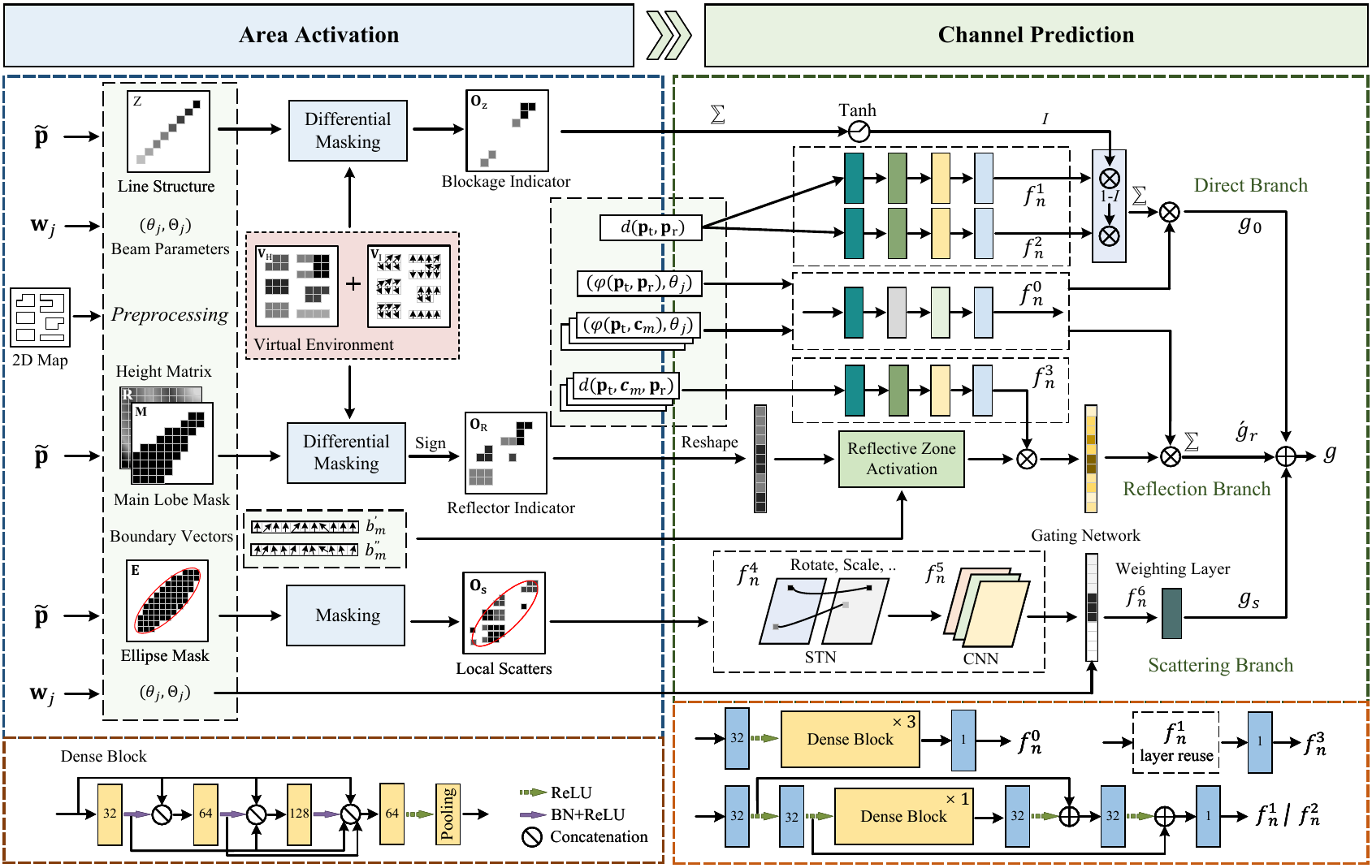} \caption{The physics-informed neural network architecture for joint MIMO beam
map and virtual environment reconstruction.}
\label{fig:network-architecture}
\end{figure*}

\section{Network Architecture and Training}

In this section, we develop a neural network architecture tailored
to the proposed propagation components for joint environmental geometry
and MIMO beam map construction. Specifically, we transform the blockage
relation (\ref{eq:virtual-obstacle-model}) and reflective zone formulation
(\ref{eq:reflective_zone_model-r1}) into neural network representations,
enabling efficient feature propagation and gradient flow for training.
Specialized network structures incorporating dense connection, layer
reuse, and modular sub-networks would be explored to learn distinct
propagation characteristics, which explicitly leverage the geometric
priors embedded in the physical modeling process. The proposed physics-informed
neural network architecture is shown in Fig.~\ref{fig:network-architecture}.

\subsection{Overall Architecture}

As illustrated in Fig.~\ref{fig:network-architecture}, the proposed
network consists of two main modules, \emph{Area Activation} and \emph{Channel
Prediction}. The Area Activation module is designed to identify the
relevant local virtual obstacles for the link $\tilde{\mathbf{p}}$
corresponding to each propagation component, while the Channel Prediction
module models their interactions for accurate channel prediction.

\subsubsection{Area Activation}

This module selects a subset of virtual obstacles relevant to the
link $\tilde{\mathbf{p}}$ and the beamforming vector $\mathbf{w}_{j}$.
As a result, only the subset of the selected virtual obstacles are
activated. For this purpose, four distinct area filters are defined,
each structured as an $M_{1}$ $\times$ $M_{2}$ matrix, matching
the size of the virtual environment.

For the blockage component, a 3D line structure $\mathbf{Z}$ is constructed,
where we project the altitude of the direct line connecting $\mathbf{p}_{\text{t}}$
and $\mathbf{p}_{\text{r}}$ to the ground. Thus, the $m$th element
of the matrix $\mathbf{Z}$ is given by $Z_{m}=(p_{\textrm{t,3}}-p_{\textrm{r,3}})\frac{\left\Vert \bar{\mathbf{c}}_{m}-\bar{\mathbf{p}}_{\textrm{r}}\right\Vert }{\left\Vert \bar{\mathbf{p}}_{\textrm{t}}-\bar{\mathbf{p}}_{\textrm{r}}\right\Vert }+p_{\textrm{r,3}}$,
if $m\in\mathcal{B}(\tilde{\mathbf{p}})$; and $Z_{m}=0$, for $m\notin\mathcal{B}(\tilde{\mathbf{p}})$.
The feature map of Area Activation for the blockage component is obtained
by 
\begin{equation}
\begin{split}\mathscr{\mathbf{\mathscr{\mathbf{O}}}}_{\textrm{Z}}=\textrm{ReLU}((\mathbf{V}_{\textrm{H}}-\mathbf{Z})\odot\mathbf{Z})\end{split}
\label{eq:obstruction_feature_map}
\end{equation}
where $\textrm{ReLU}(x)=\max(x,0)$ and $\odot$ is the Hadamard product.
The non-zero elements of $\mathscr{\mathbf{\mathscr{\mathbf{O}}}}_{\textrm{Z}}$
indicate the locations where the propagation is blocked along the
direct path.

For the reflection component, a main lobe mask $\mathbf{M}$ and a
height matrix $\mathbf{R}$ are defined. In the main lobe mask $\mathbf{M}$,
each entry $M_{m}$ takes value 1 if $\left|\varphi(\mathbf{p}_{\text{t}},\mathbf{c}_{m})-\theta_{j}\right|\leq\Theta_{j}/2$;
otherwise it takes a value 0. In the height matrix $\mathbf{R}$,
the $m$th entry denotes the height condition $\widehat{v}_{m}(\tilde{\mathbf{p}})$
from (\ref{eq:Height Condition}) at the $m$th grid cell for the
link $\tilde{\mathbf{p}}$. Thus, the feature map of Area Activation
for the reflection component is given by 
\begin{equation}
\begin{split}\mathscr{\mathbf{\mathscr{\mathbf{O}}}}_{\textrm{R}}=\mathrm{\textrm{sign}}(\textrm{ReLU}(\mathbf{V}_{\textrm{H}}-\mathbf{R}))\odot\mathbf{M}\end{split}
\label{eq:reflection_feature_map}
\end{equation}
where $\textrm{sign}(x)=\mathbb{I}\{x>0\}$. The non-zero elements
of $\mathscr{\mathbf{\mathscr{\mathbf{O}}}}_{\textrm{R}}$ indicate
the potential reflection locations within the main lobe.

For the scattering component, an ellipse mask $\mathbf{E}$ is constructed
to select local obstacles for a link $\tilde{\mathbf{p}}$. Each entry
$E_{m}$ takes value 1 if $m\in\mathcal{B}_{\mathrm{s}}(\tilde{\mathbf{p}},\mathbf{V})$,
and 0 otherwise. The feature map of Area Activation for the scattering
is given by 
\begin{equation}
\begin{split}\mathscr{\mathbf{\mathscr{\mathbf{O}}}}_{\textrm{S}}=\mathbf{V}_{\textrm{H}}\odot\mathbf{E}\end{split}
\label{eq:scattering_feature_map}
\end{equation}
where the non-zero elements of $\mathscr{\mathbf{\mathscr{\mathbf{O}}}}_{\textrm{S}}$
correspond to the local geometric features of the selected obstacles.

\subsubsection{Channel Prediction}

It includes four branches. The first branch takes the feature map
$\mathbf{O}_{\text{Z}}$ to predict the path loss of the direct path.
The second branch learns the beam pattern. The third branch takes
the feature map $\mathbf{O}_{\text{R}}$ to form the reflective zone
and estimate the reflected paths within it. The final branch takes
the feature map $\mathscr{\mathbf{\mathscr{\mathbf{O}}}}_{\textrm{S}}$
to capture the residual fluctuation due to scattering. The details
are illustrated as follows.

\subsection{The Beam Pattern Gain}

The beam pattern $B(\cdot)$ is learned using a fully connected dense
neural network (FC-DsNet). As illustrated in Fig.~\ref{fig:network-architecture},
the architecture comprises a feedforward layer, followed by a stack
of three identical dense blocks, and concludes with a final feedforward
layer. Each dense block contains three feedforward layers, with a
batch normalization (BN) layer and a ReLU layer between them. Through
dense connections, each layer receives and concatenates feature maps
from all preceding layers, and then forwards them to the next layer,
which facilitates direct gradient flow between layers, mitigating
the vanishing gradient problem and enhancing information flow throughout
the network. Finally, a feedforward layer merges the features, followed
by a pooling layer that halves the feature dimensions. For an arbitrary
location $\boldsymbol{x}$ relative to $\mathbf{p}_{\mathrm{t}}$,
the beam pattern component is implemented by
\begin{equation}
\begin{split}B(\varphi(\mathbf{p}_{\mathrm{t}},\boldsymbol{x}),\mathbf{w}_{j})=f_{n}^{0}([\varphi(\mathbf{p}_{\mathrm{t}},\boldsymbol{x});\theta_{j}])\end{split}
\label{eq:beamforming_gain_nn}
\end{equation}
where $f_{n}^{0}(\cdot)$ represents the mapping of the proposed FC-DsNet,
and $[\varphi(\mathbf{p}_{\mathrm{t}},\boldsymbol{x});\theta_{j}]$
denotes the concatenation of the azimuth angle $\varphi(\mathbf{p}_{\mathrm{t}},\boldsymbol{x})$
and the beam angle $\theta_{j}$.

\subsection{The Blockage-Aware Direct Branch}

The blockage branch is designed to implement the blockage relation
(\ref{eq:virtual-obstacle-model}) to characterize LOS and NLOS conditions,
which is approximated by a tanh function to ensure a non-degenerated
gradient for training. The design of the soft indicator function for
the LOS region $\tilde{\mathcal{D}}_{0}(\mathbf{V}_{\textrm{H}})$
is given as
\begin{equation}
\begin{split}I=1-\tanh(\mathrm{sum}(\mathscr{\mathbf{\mathscr{\mathbf{O}}}}_{\textrm{Z}})\sigma_{0})\end{split}
\label{eq:obstruction_level}
\end{equation}
where $\mathrm{sum}(\mathbf{A})=\sum_{ij}a_{ij}$, $\tanh(x)=(e^{x}-e^{-x})/(e^{x}+e^{-x})$,
and $\sigma_{0}$ is a scalar parameter. Consequently, when the feature
map $\mathscr{\mathbf{\mathscr{\mathbf{O}}}}_{\textrm{Z}}$ contains
all zeros, the direct path is not blocked by any virtual obstacle,
we have the indicator $I=1$. With the indicator $I$, the blockage-aware
channel model of the direct path in (\ref{eq:LOS-component-1}) is
implemented using two parallel FC-DsNets as
\begin{equation}
\begin{split}g_{\mathrm{0}}(\tilde{\mathbf{p}},\mathbf{V}_{\textrm{H}})=If_{n}^{1}(d(\mathbf{p}_{\mathrm{t}},\mathbf{p}_{\mathrm{r}}))+(1-I)f_{n}^{2}(d(\mathbf{p}_{\mathrm{t}},\mathbf{p}_{\mathrm{r}}))\end{split}
\label{eq:LOS-component-1-1}
\end{equation}
where $f_{n}^{1}(\cdot)$ and $f_{n}^{2}(\cdot)$ denote the two parallel
FC-DsNets. The networks $f_{n}^{1}(\cdot)$ and $f_{n}^{2}(\cdot)$
share the same architecture but differ from $f_{n}^{0}(\cdot)$ as
they serve distinct tasks. As illustrated in Fig.~\ref{fig:network-architecture},
the network architecture includes two feedforward layers, followed
by a dense block, and ends with three feedforward layers. The feedforward
layers at both ends of the dense block are connected via skip connections.
This structure can help reduce network complexity compared to fully
dense connections while still preserving effective gradient propagation.

\subsection{The Reflection Branch}

The reflection branch is designed to aggregate reflected paths within
the reflective zones. Define $\mathbf{B}',\mathbf{\widehat{B}}\in\mathbb{R}^{M_{1}M_{2}\times2}$
as the matrix representations of all $\mathbf{b}_{m}'(\tilde{\mathbf{p}})$
and $\mathbf{\widehat{b}}_{m}(\tilde{\mathbf{p}})$ for the link $\tilde{\mathbf{p}}$,
respectively. For operation alignment, a similar notation $\mathbf{\mathbf{V}_{\textrm{I}}}\in\mathbb{R}^{M_{1}M_{2}\times2}$
is also constructed for the normal vector $\bar{\mathbf{l}}_{m}$
for all grid cells. The neural network representation of the reflective
zone in (\ref{eq:reflective_zone_model-r1}) is thus implemented by
\begin{equation}
\mathbf{r}=\tanh(\mathrm{ReLU}(F(\mathbf{B}',\mathbf{\widehat{B}},\mathbf{\mathbf{V}_{\textrm{I}}}))\overline{\mathscr{\mathbf{\mathscr{\mathbf{O}}}}_{\textrm{R}}}\sigma_{1})\label{eq:Ref_Zone_NN}
\end{equation}
and
\begin{equation}
F(\mathbf{B}',\mathbf{\widehat{B}},\mathbf{\mathbf{V}_{\textrm{I}}})=\mathrm{sumc}(\mathbf{\mathbf{V}_{\textrm{I}}}\odot\mathbf{\widehat{B}})-\mathrm{sumc}(\mathbf{B}'\odot\mathbf{\widehat{B}})\label{eq:Ref_Zone_NN-1}
\end{equation}
where $\overline{\mathscr{\mathbf{\mathscr{\mathbf{O}}}}_{\textrm{R}}}=\mathrm{vec}(\mathscr{\mathbf{\mathscr{\mathbf{O}}}}_{\textrm{R}})$
denotes the column vectorization of a matrix, $\sigma_{1}$ is a scalar
parameter, and $\mathrm{sumc(\cdot)}$ represents the row-wise summation
operation. For the impact of blockage, define $\varGamma_{m}$ as
the LOS indicator that takes value 1 if neither the link $(\mathbf{p}_{\mathrm{t}},\mathbf{c}_{m})$
nor $(\mathbf{c}_{m},\mathbf{p}_{\mathrm{r}})$ is blocked by any
virtual obstacle, as the implementation in (\ref{eq:obstruction_level}).
As a result, the reflection via the $m$th oriented virtual obstacle
in (\ref{eq:reflection model}) is obtained by
\begin{equation}
\begin{split}g_{\mathrm{r}}^{m}(\tilde{\mathbf{p}},\mathbf{w}_{j},\mathbf{V})=r_{m}\varGamma_{m}f_{n}^{3}(d'(\tilde{\mathbf{p}},\mathbf{c}_{m}))+(1-r_{m}\varGamma_{m})\tau\end{split}
\label{eq:reflection_nn}
\end{equation}
where $r_{m}$ denotes the $m$th element of vector $\mathbf{r}$,
$f_{n}^{3}(\cdot)$ is the neural network, and $\tau$ specifies the
minimum cutoff threshold for gain. The network $f_{n}^{3}(\cdot)$
reuses all network layers of $f_{n}^{1}(\cdot)$ and adds a feedforward
layer to predict the final path gain. This design implies that the
signal attenuation of reflected paths and direct paths in free space
share some similar characteristics.

\subsection{The Scattering Branch}

The scattering branch is designed to capture the spatial invariance
property of the model (\ref{eq:Scattering-component}) and to map
the local environment to residual scattering. We adopt the neural
network from \cite{WangqianChen:J24}, which integrates a spatial
transformation network (STN) to capture rotation and scaling invariance,
and a CNN to extract structural similarities from the local geometry
of $\mathbf{V}$ for scattering mapping. Note that the design in \cite{WangqianChen:J24}
does not account for beam characteristics or support beam selection.
To address this limitation, we extend its output to incorporate beam
expansion and design a gating network to achieve beam selectivity.
Thus, the scattering implementation is given by
\begin{equation}
\begin{split}g_{\mathrm{s}}(\tilde{\mathbf{p}},\mathbf{w}_{j},\mathbf{V})=f_{n}^{6}(f_{n}^{5}(f_{n}^{4}(\mathscr{\mathbf{\mathscr{\mathbf{O}}}}_{\textrm{S}}))H(\mathbf{w}_{j}))\end{split}
\label{eq:reflection_nn-1}
\end{equation}
where $f_{n}^{4}(\cdot)$, $f_{n}^{5}(\cdot)$ denote the STN and
CNN, respectively, $H(\mathbf{w}_{j})$ is a gating network that selects
the contributing beams, and $f_{n}^{6}(\cdot)$ is a weighting layer
for the selected beams.

Specifically, we extend the output dimension of the CNN to $|\mathcal{W}|$,
each corresponding to a specific beam in the codebook. The gating
network $H(\mathbf{w}_{j})$ then produces a multi-hot output that
selects index $j$ and its neighbors, while setting all other entries
to zero. The final weighting layer $f_{n}^{6}(\cdot)$ learns to assign
weights to the selected beams through training.

\subsection{Neural Network Training}

We implement the proposed network using the deep learning library
Pytorch and train it in a supervised manner using the Adam optimizer.
Moreover, we employ a hybrid learning algorithm that combines particle
swarm optimization (PSO) with Adam algorithm to update the parameter
$\mathbf{\mathbf{V}_{\textrm{I}}}$. In each iteration, PSO introduces
perturbations to the current solution for a broader global exploration,
followed by Adam algorithm for local fine-tuning. This hybrid strategy
combines the global exploration capability of PSO to escape local
optima and the fine-tuning efficiency of Adam algorithm for convergence.

\section{Simulations}

The experiments are conducted using simulated channel data generated
by a \ac{rt} software,\emph{ Remcom Wireless Insite}.

In the simulation, the region of interest is a 640~m $\times$ 640~m
square area with 9 \acpl{tx} mounted at a height of 50~m. Each TX
is equipped with a 16-element \ac{mimo} antenna array, and the DFT
codebook is used. Ground \acpl{rx} are fixed at a height of 2~m.
Several buildings are included to enhance channel diversity. Channel
data are generated with up to 6 reflections and 1 diffraction. A sinusoidal
waveform at 2.8 GHz with 100 MHz bandwidth is used. In total, over
120 beam maps are generated, comprising more than 500,000 channel
samples that capture rich propagation characteristics, including blockage
and reflection, thereby providing a comprehensive dataset for modeling
and analysis.

The proposed model is compared with the following baselines, RadioUNet
\cite{LevRonYap:J21}, RME-GAN \cite{ZhaWij:J23}, DeepCom \cite{TegRom:J21},
SVT \cite{CaiCand:J20}, and KNN, which are summarized as follows:
\begin{enumerate}
\item RadioUNet: This network consists of two UNets. The first takes the
city map, sparse observations, and TX position as inputs to produce
a coarse beam map, which is then refined by the second UNet using
both the coarse map and original inputs to generate the final beam
map.
\item RME-GAN: This network utilizes UNet as the generator, which takes
the city map, sparse observations and TX position as inputs for beam
map generation.
\item DeepCom: This network employs a CNN-based encoder-decoder architecture
that compresses the sparse observation map into a low-dimensional
latent representation and then reconstructs the complete beam map
from it.
\item SVT: The algorithm recovers a matrix by minimizing its nuclear norm,
iteratively applying singular value decomposition and soft-thresholding
to converge towards the original low-rank matrix.
\item KNN: The algorithm selects 6 measurement samples that are closest
to $\tilde{\mathbf{p}}$ and forms the neighbor set $\mathcal{N}(\tilde{\mathbf{p}})$.
The channel gain at $\tilde{\mathbf{p}}$ is given by $g(\tilde{\mathbf{p}})=\mu^{-1}\sum_{i\in\mathcal{N}(\tilde{\mathbf{p}})}\omega(\tilde{\mathbf{p}},\tilde{\mathbf{p}}^{(i)})y^{(i)}$,
where $\omega(\tilde{\mathbf{p}},\tilde{\mathbf{p}}^{(i)})=\exp[-\kappa||\tilde{\mathbf{p}}-\tilde{\mathbf{p}}^{(i)}||_{2}^{2}]$
and $\mu=\sum_{i\in\mathcal{N}(\tilde{\mathbf{p}})}\omega(\tilde{\mathbf{p}},\tilde{\mathbf{p}}^{(i)})$.
\end{enumerate}

Specifically, data from 5 TXs, with nearly 30\% of measurement samples
from each unless otherwise specified, is used for training, while
the remaining 4 TXs are reserved for testing. Since our focus is on
high-energy regions, such as the main lobe of the beam, we truncate
channel gains below a cutoff threshold for performance evaluation.
The truncation function is defined as $\zeta(x)=\max(x,\eta)$, where
$\eta$ is a cutoff threshold, and it is set to $-130$ dB for all
experiments. The mean absolute error (MAE) and the root mean square
error (RMSE) between the truncated simulated gain $\zeta(y^{(i)})$
and the estimated gain $\zeta(g(\tilde{\mathbf{p}}^{(i)},\mathbf{w}_{j};\bm{\Theta},\mathbf{V}))$
will serve as performance metrics. This approach aligns with practical
cases where beamforming is designed to ensure high communication quality
within a target region, while performance outside it is less critical.

\subsection{Performance of the Proposed Reflection Component}

In this experiment, we evaluate the improvement in \ac{mimo} beam
map construction with the reflection component. 

Table~\ref{Tab:example_with_without_ref.} presents a comparison
of the \ac{mimo} beam map accuracy achieved by the proposed model
with different design components involving blockage, reflection and
scattering. The results show that incorporating both direct and reflected
paths significantly improves performance compared to using only the
direct branch, which results in a $20.7\%$ reduction in MAE and a
$22.4\%$ reduction in RMSE. In addition, further extending the model
to account for scattering leads to additional performance gains, with
MAE and RMSE further reduced by $19.1\%$ and $11.6\%$, respectively,
thereby demonstrating the effectiveness of the proposed physics-informed
components.
\begin{figure}[!t]
\centering \subfigure[]{\includegraphics[scale=0.325]{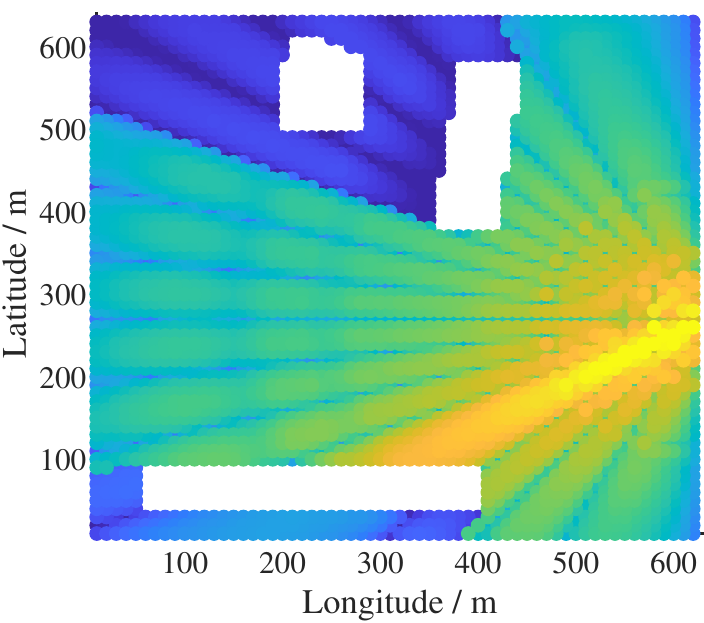}}\hspace{0.5cm}\subfigure[]{\includegraphics[scale=0.32]{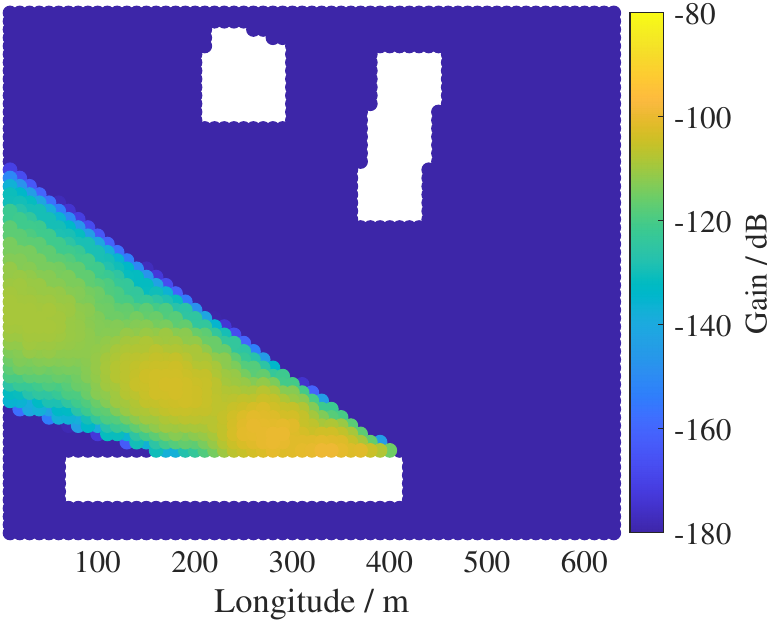}}

\subfigure[]{\label{fig:RadioMap-Example-LOS}\includegraphics[scale=0.32]{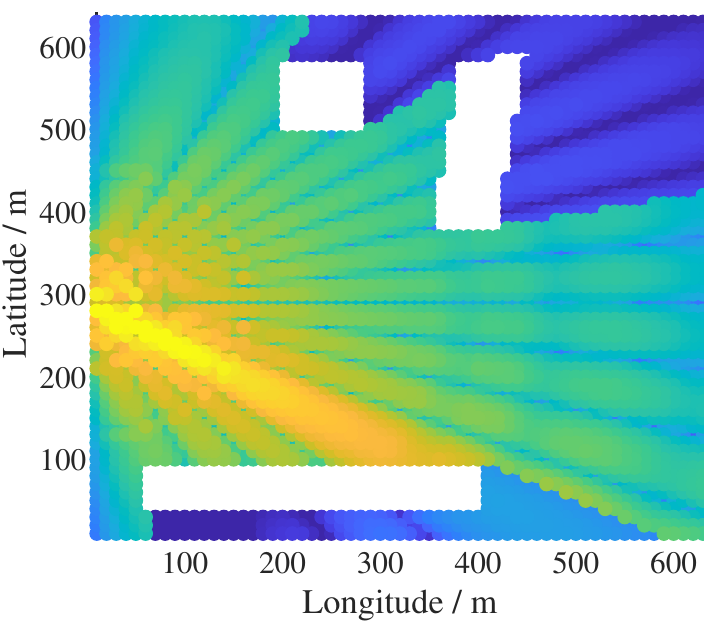}}\hspace{0.5cm}\subfigure[]{\label{fig:RadioMap-Example-Ref}\includegraphics[scale=0.32]{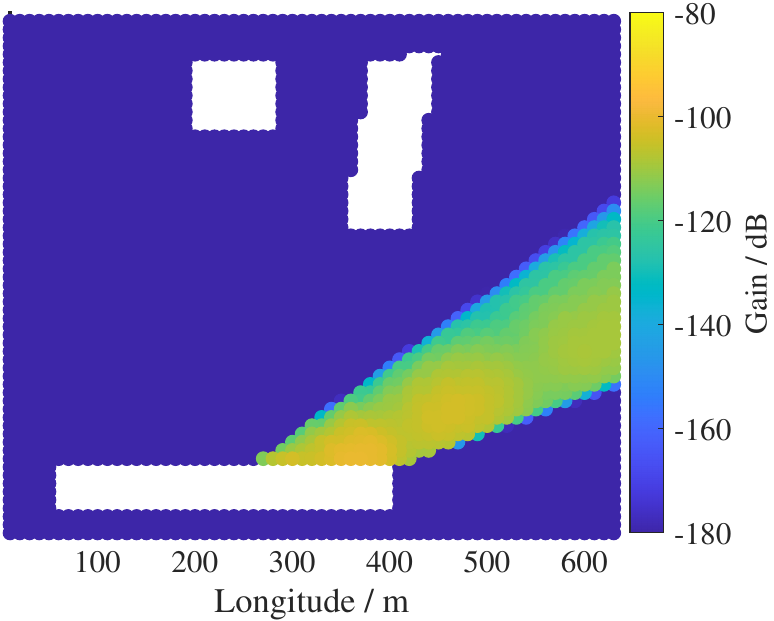}}\caption{The direct and reflected beams are separated: a) and c) The beams
from two TXs are blocked; b) and d) The corresponding reflected beams.}
\label{fig:RadioMap-Example}
\end{figure}

\begin{table}
\caption{Comparison of MIMO beam map with different design branches}

\centering\label{Tab:example_with_without_ref.} \renewcommand\arraystretch{1.5}
\begin{tabular}{ p{3.7cm}<{\centering}| p{1.90cm}<{\centering}| p{1.90cm}<{\centering}}   
\hline   

Scheme          & MAE / dB        & RMSE / dB \\  
  
\hline   
Direct branch    &3.144  &5.737    \\   
\hline

Only direct + reflection  &\textbf{2.491}   &\textbf{4.461}  \\   
\hline

Direct + reflection + scattering  &\textbf{2.016}   &\textbf{3.941}  \\   
\hline

\end{tabular} 
\end{table}

Fig.~\ref{fig:RadioMap-Example} illustrates the direct and reflected
beams generated by the proposed model. The results show that the proposed
model can reconstruct the radio beam geometry, including beam shape,
direction, blockage, and reflection. For a specific obstacle, the
proposed model can capture direct beams impinging on the obstacle
from different \acpl{tx} and directions, along with their corresponding
reflected beams. As a result, the blockage and reflection components
of a beam can be separated from only \ac{rss} measurements. This
capability to identify blockage and reflection provides practical
advantages in real-world applications, as further discussed in this
paper.

\subsection{Geometry of the Virtual Environment}

We further illustrate the geometry of the environment captured by
$\mathbf{V}$. Specifically, we assume the obstacle locations are
known from a 2D map, which lacks height and orientation information.
Each obstacle is further divided into smaller parts, with each part
potentially spanning multiple grid cells, allowing each part to independently
learn its own parameters.

Fig.~\ref{fig:3D terrain map} shows the 3D terrain of the simulated
environment along with the deployment of the \acpl{tx}, while the
corresponding oriented virtual obstacle map, incorporating both the
height information $\mathbf{V}_{\textrm{H}}$ and angle information
$\mathbf{V_{\mathrm{\Phi}}}$ is shown in Fig.~\ref{fig:Height_Map}.
The results indicate that the geometry and distribution of the obstacles
learned by the proposed model closely align with the simulated environment,
demonstrating its ability to capture environmental geometry from \ac{rss}
measurements.

Note that city maps usually do not correspond to the desired virtual
environment maps, as they typically lack critical radio information
such as electromagnetic coefficients. In Fig.~\ref{fig:Height_Map},
the height $\mathbf{V}_{\textrm{H}}$ does not have to exactly match
the actual obstacle heights; instead, it is optimized so that the
channel model (\ref{eq:radio-map-model}) fits the \ac{rss} measurements.
Consequently, different parts of the same obstacle may be assigned
different heights. The normal angles $\mathbf{V_{\mathrm{\Phi}}}$
for each obstacle align with their orientations in the simulated environment,
and they also remain largely consistent across different parts of
the same obstacle.

\begin{figure}[!t]
\centering\subfigure[]{\label{fig:3D terrain map}\includegraphics[scale=0.33]{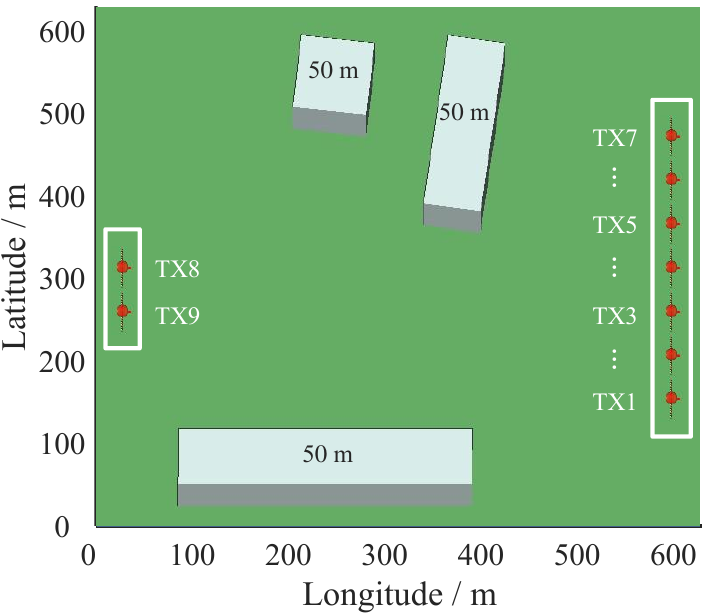}
}\subfigure[]{\label{fig:Height_Map}\includegraphics[scale=0.335]{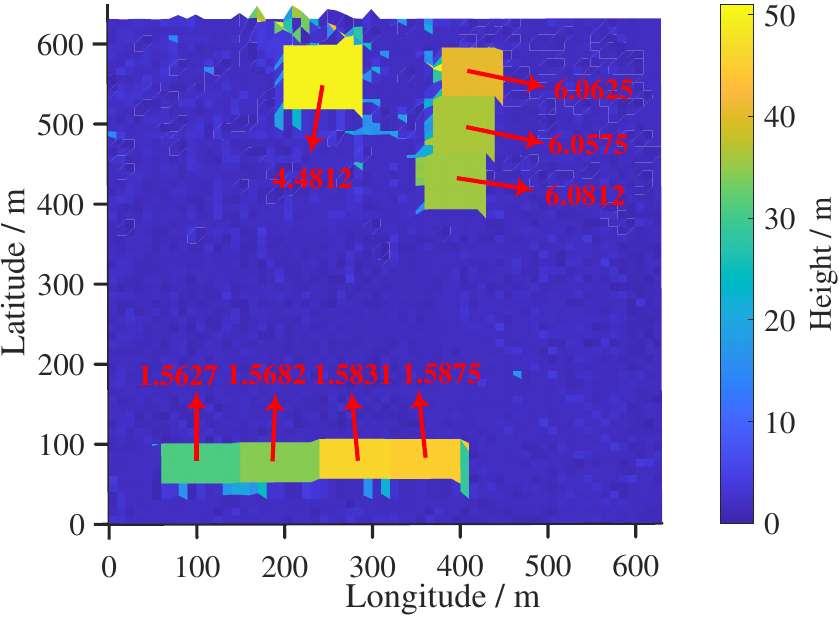}
}\caption{a) The 3D terrain of the simulated environment, where odd-numbered
TXs (e.g., 1, 3, 5) are used for training and even-numbered TXs (e.g.,
2, 4, 6) are used for testing; b) The oriented virtual obstacle map
learned by the proposed model.}
\label{fig:Geometry of the Virtual Environment}
\end{figure}

\begin{table}
\caption{Comparison of beam map accuracy with different methods.}

\centering\label{Tab:example_com_different_Alg.} \renewcommand\arraystretch{1.5}
\begin{tabular}{ p{3cm}<{\centering}| p{2.25cm}<{\centering}| p{2.25cm}<{\centering}}   
\hline   

Scheme          & MAE / dB        & RMSE / dB \\  
  
\hline   
RadioUNet \cite{LevRonYap:J21}            & 3.171          & 5.763     \\   
\hline 

RME-GAN \cite{ZhaWij:J23}                 & 3.889          & 6.499     \\   
\hline   

DeepCom \cite{TegRom:J21}                 & 2.975          & 5.235     \\   
\hline   

SVT \cite{CaiCand:J20}                    & 4.804             & 8.647     \\   
\hline  

KNN                     & 3.360             & 5.892     \\   
\hline

Proposed              &\textbf{2.016}    &\textbf{3.941}    \\   
\hline

\end{tabular} 
\end{table}

\subsection{Comparison to Other State-of-the-arts}

\begin{figure*}[!t]
\centering \subfigure[]{\label{fig:radiomap_ref_los}\includegraphics[scale=0.315]{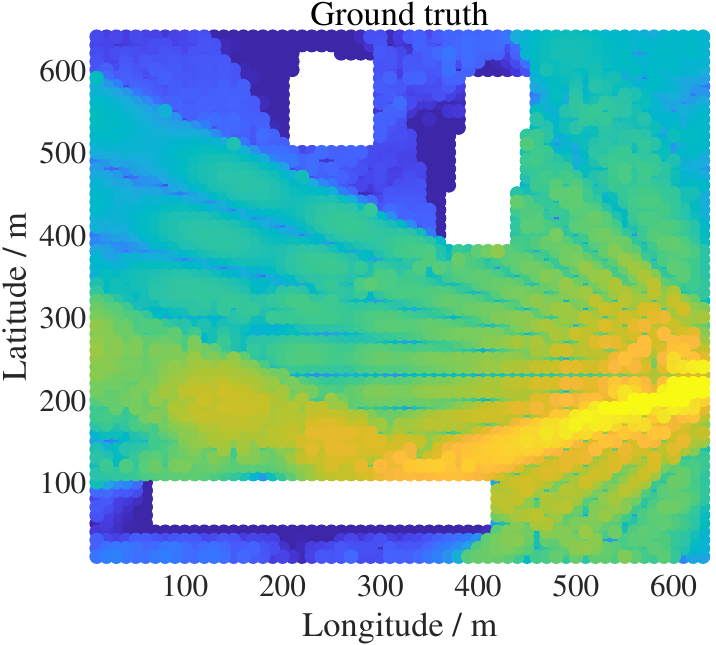}\includegraphics[scale=0.315]{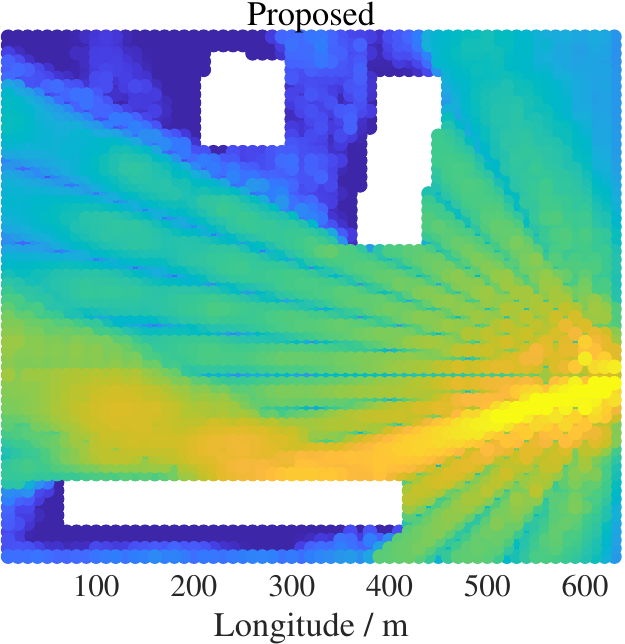}\includegraphics[scale=0.315]{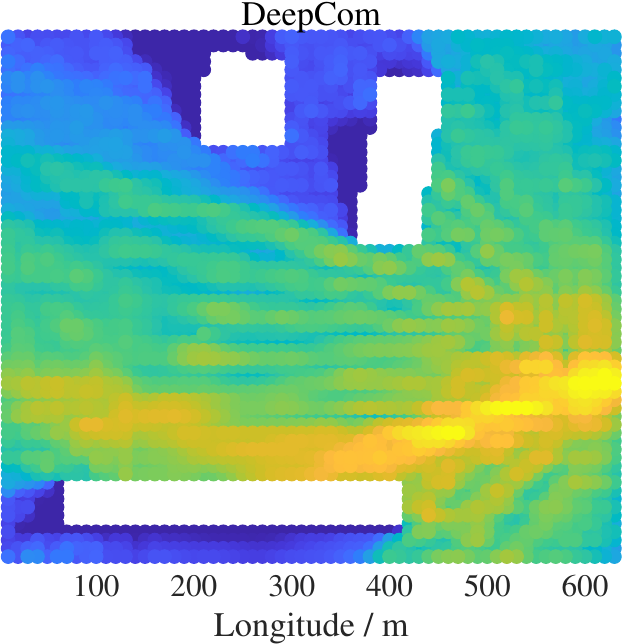}\includegraphics[scale=0.315]{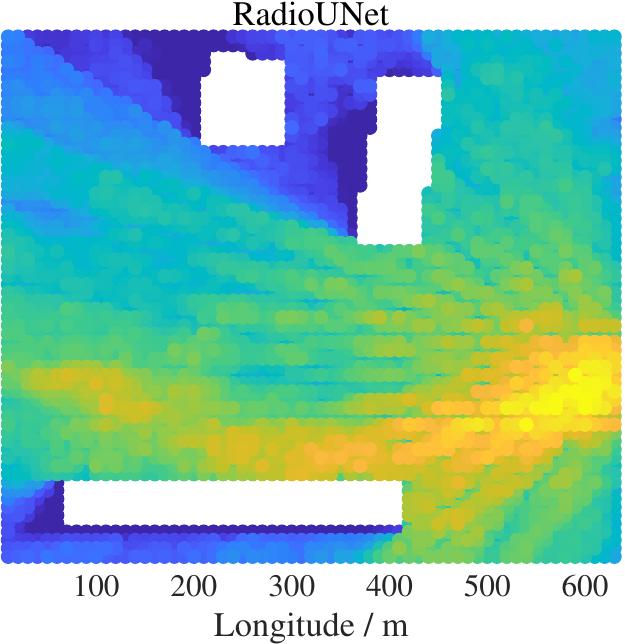}\includegraphics[scale=0.315]{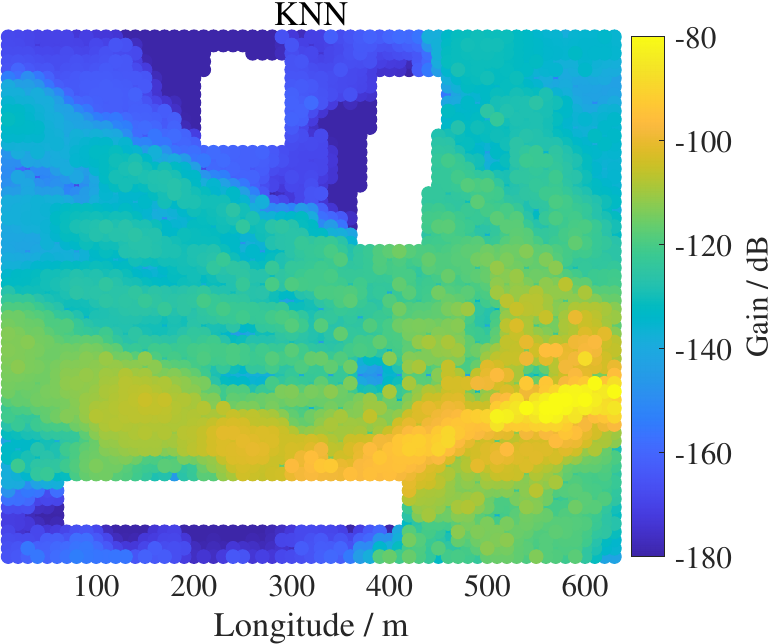}}

\centering \subfigure[]{\label{fig:radiomap_ref_ref}\includegraphics[scale=0.315]{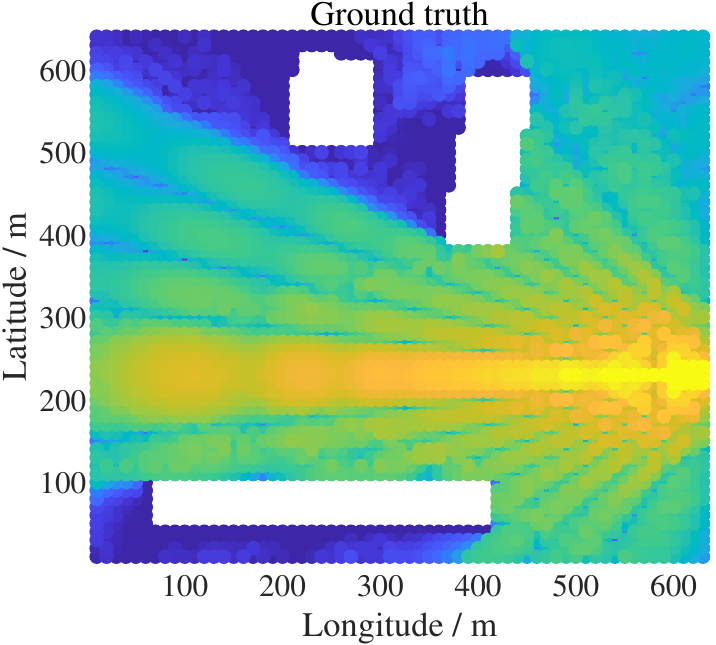}\includegraphics[scale=0.315]{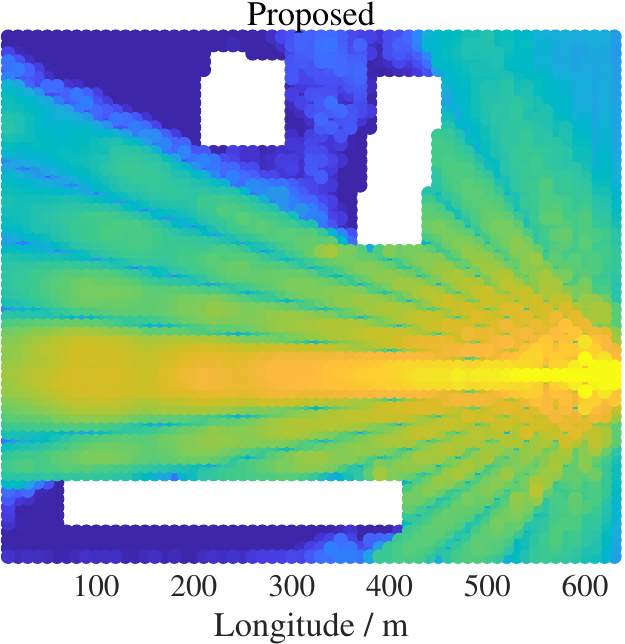}\includegraphics[scale=0.315]{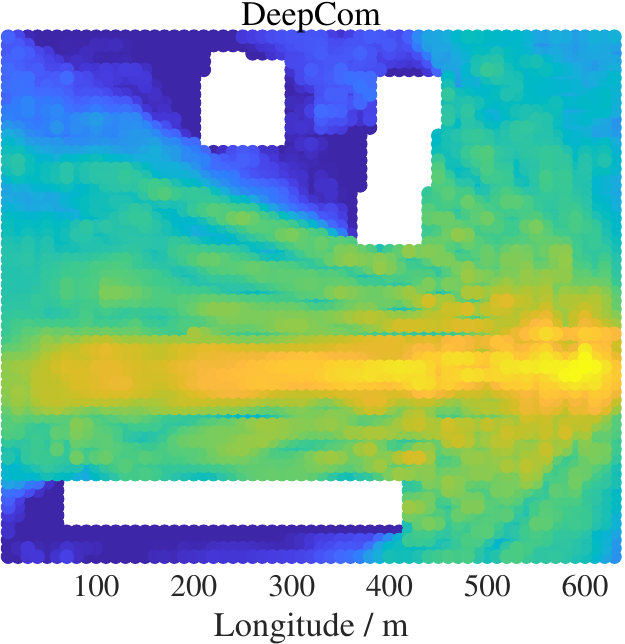}\includegraphics[scale=0.315]{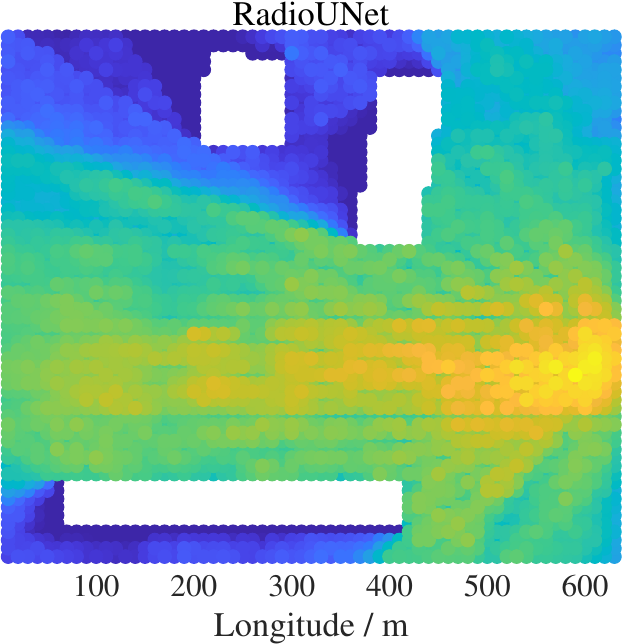}\includegraphics[scale=0.315]{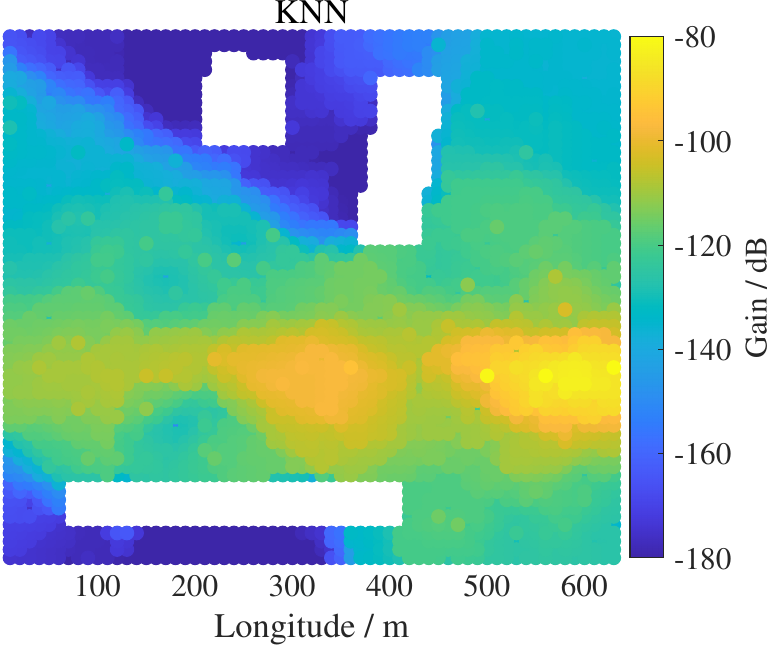}}\caption{The ground truth and reconstructed MIMO beam maps: a) The reflected
beam maps constructed under $30\%$ training samples; b) The direct
beam maps without reflections constructed under $5\%$ training samples.}
\label{fig:RadioMap-Example-Reflection}
\end{figure*}

\begin{figure}[!t]
\hspace{0.2cm}\includegraphics[scale=0.65]{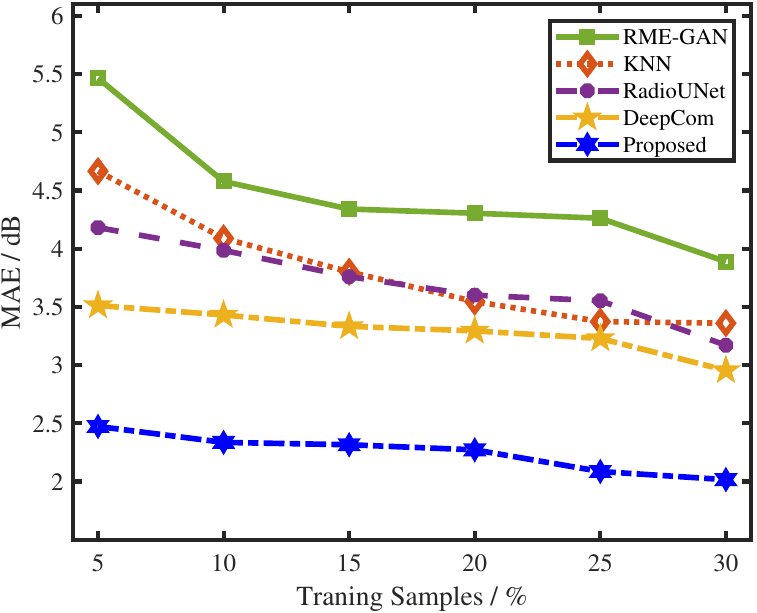}
\caption{Reconstruction MAE versus the ratio of training samples.}
\label{fig:SampleVsMthod}
\end{figure}

We compare the performance of the MIMO beam map construction of the
proposed model and the baselines.

Two slices of the MIMO beam maps are presented in Fig.~\ref{fig:RadioMap-Example-Reflection},
where Fig.~\ref{fig:radiomap_ref_los} illustrates a reflection-involved
scenario, and Fig.~\ref{fig:radiomap_ref_ref} shows a direct beam
case without reflection. The results show that the proposed model
clearly delineates the beam direction and width, as well as both incident
and reflected beams. By contrast, although the baselines can partially
learn beam shapes and reflection with $30\%$ of the training samples,
only DeepCom performs well at $5\%$, as the others fail to capture
beam features. This highlights the advantage of the proposed model
in beam reconstruction across both scenarios.

Table~\ref{Tab:example_com_different_Alg.} presents the MAE and
RMSE of various methods for beam map construction. The proposed model
significantly outperforms all baselines across both metrics. Compared
to mainstream deep learning approaches, such as RadioUNet, RME-GAN,
and DeepCom, it achieves an improvement of $32.2\%-48.1\%$ in MAE
and $24.7\%-39.3\%$ in RMSE. Among the baselines, SVT performs the
worst, likely due to the limited effectiveness of such matrix completion
technique in obstructed environments. In addition, unlike other baselines
that require extra input measurements during inference, the proposed
model does not require any additional data to operate after training,
making it more practical for real-world deployment while maintaining
superior accuracy.

In addition, we evaluate beam map construction across various training
sample ratios, ranging from $5\%$ to $30\%$. Fig.~\ref{fig:SampleVsMthod}
illustrates the relationship between the MAE and the training samples.
It is observed that RME-GAN, KNN, and RadioUNet suffer significant
performance drops as the training samples decrease, whereas the proposed
model and DeepCom degrade slightly. In contrast, the proposed model
exhibited exceptional stability with merely a 0.4 dB variation in
MAE, which shows strong robustness and data efficiency. Notably, the
proposed model also outperforms all baselines in all cases, demonstrating
our superiority in achieving high accuracy even with substantially
reduced data requirements.

\begin{figure*}[!t]
\centering \subfigure[Case I]{\label{fig:Beam Extrapolation case1}\includegraphics[scale=0.315]{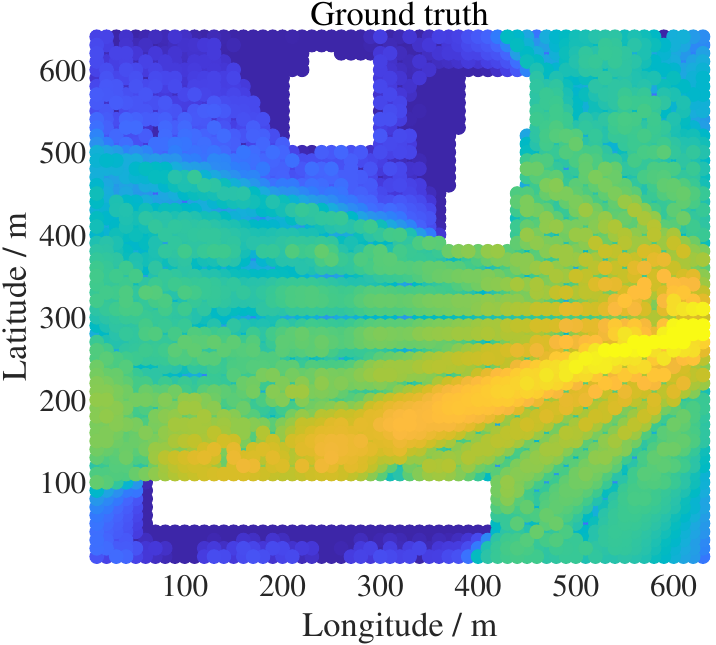}\includegraphics[scale=0.315]{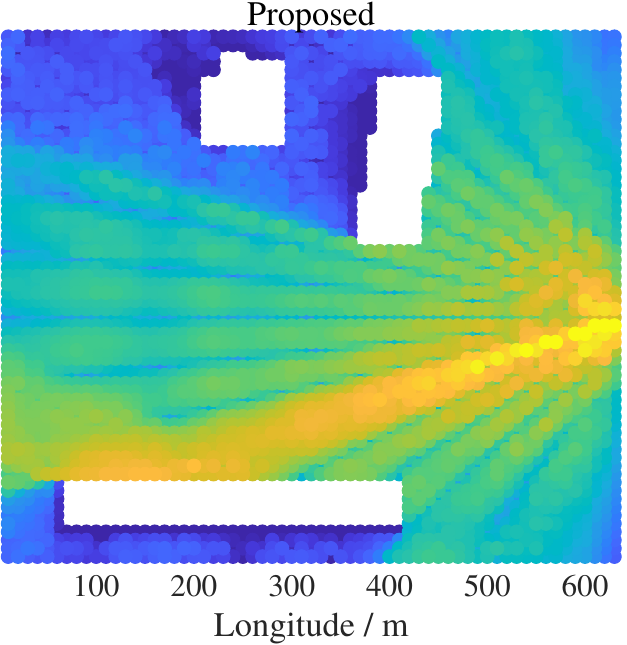}\includegraphics[scale=0.315]{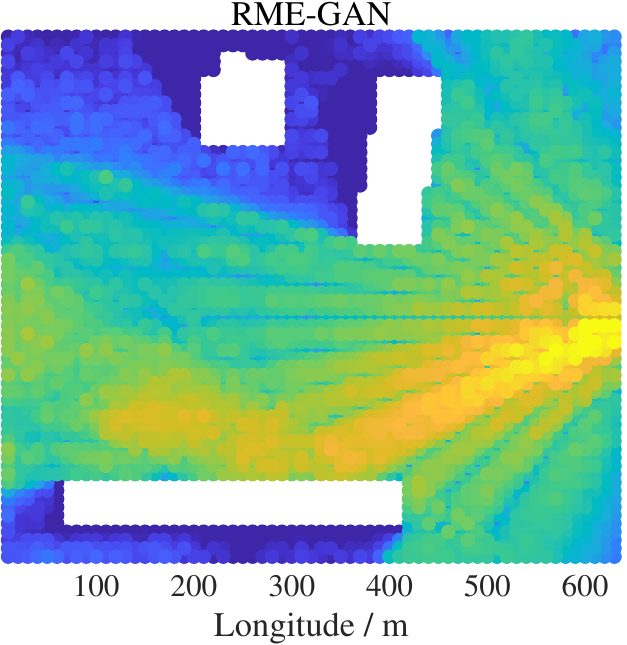}\includegraphics[scale=0.315]{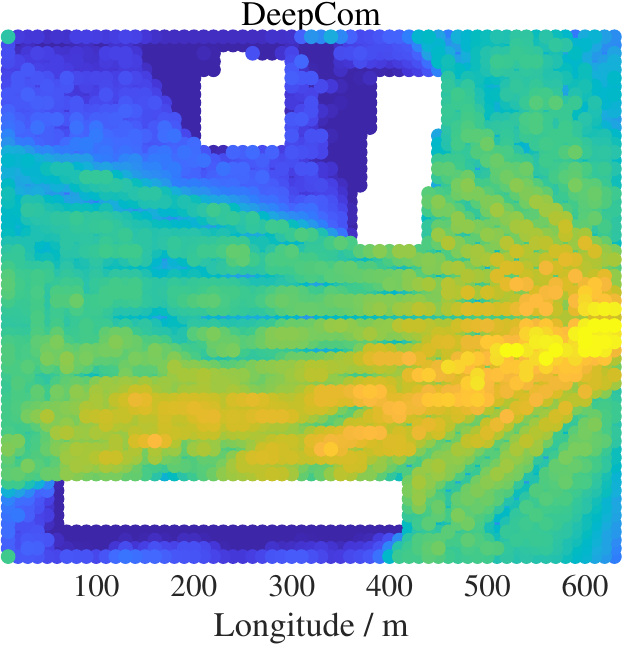}\includegraphics[scale=0.315]{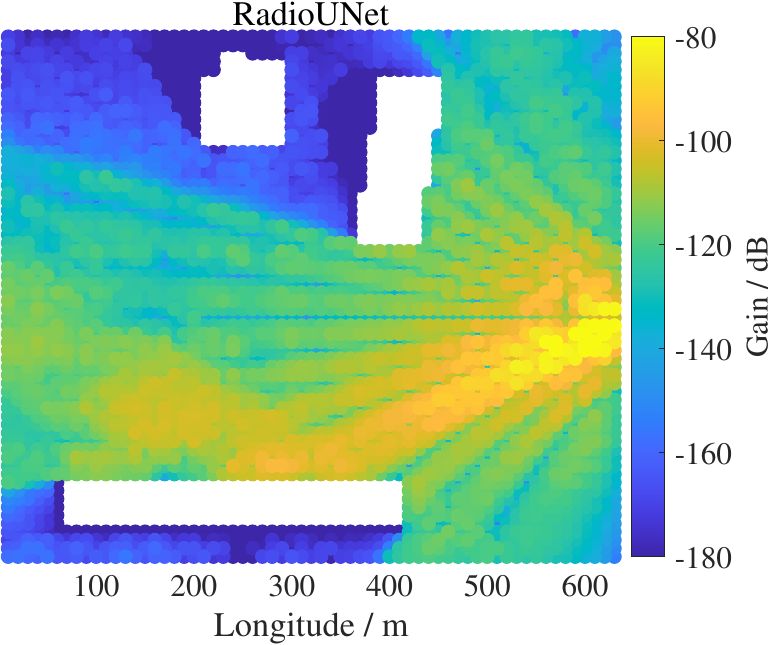}}

{\includegraphics[scale=0.315]{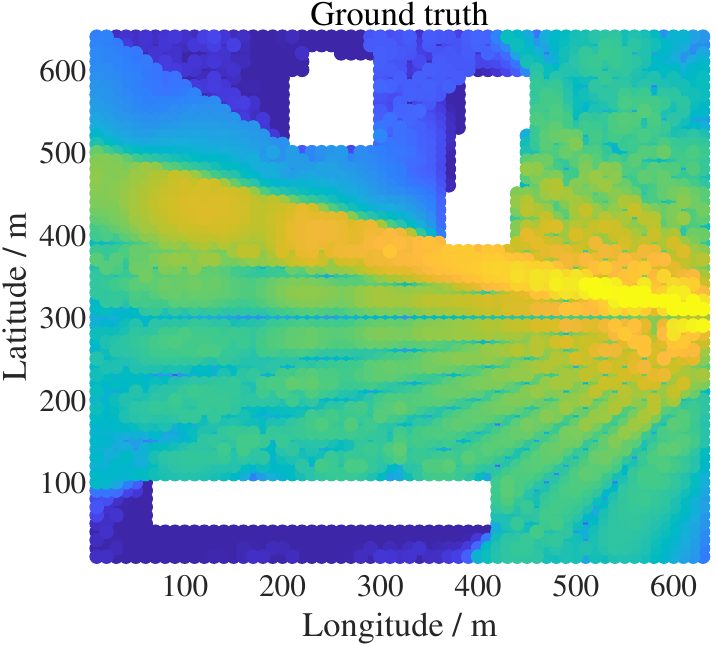}\includegraphics[scale=0.315]{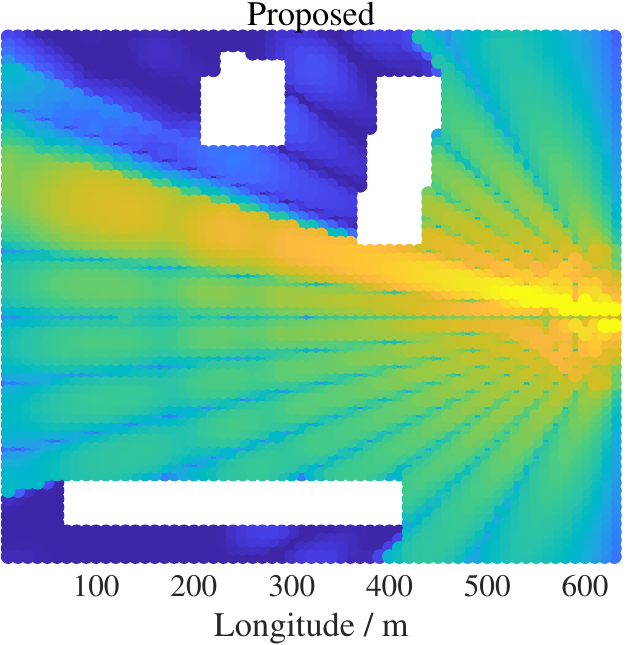}\includegraphics[scale=0.315]{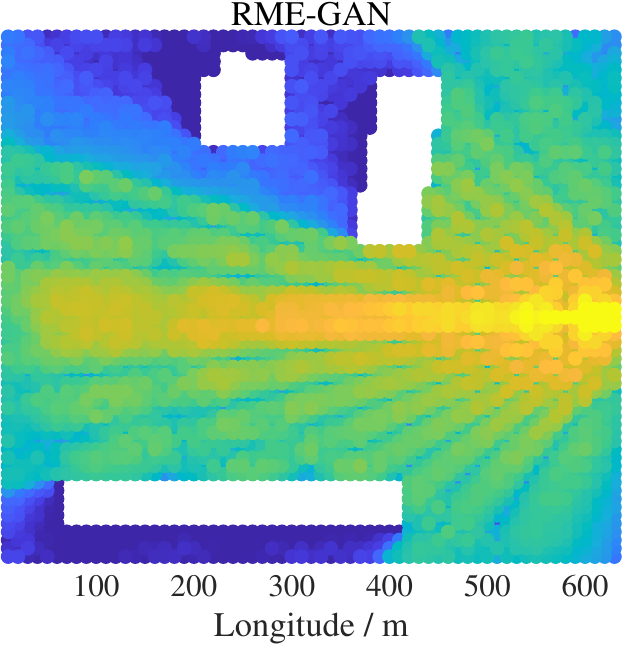}\includegraphics[scale=0.315]{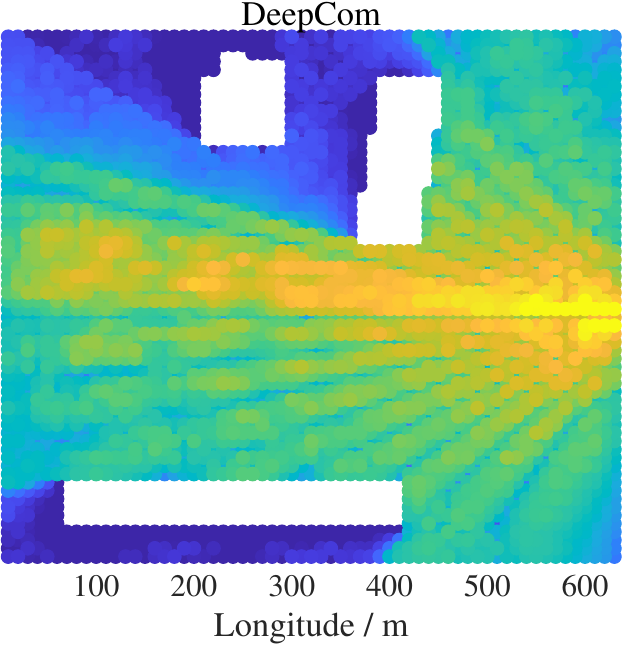}\includegraphics[scale=0.315]{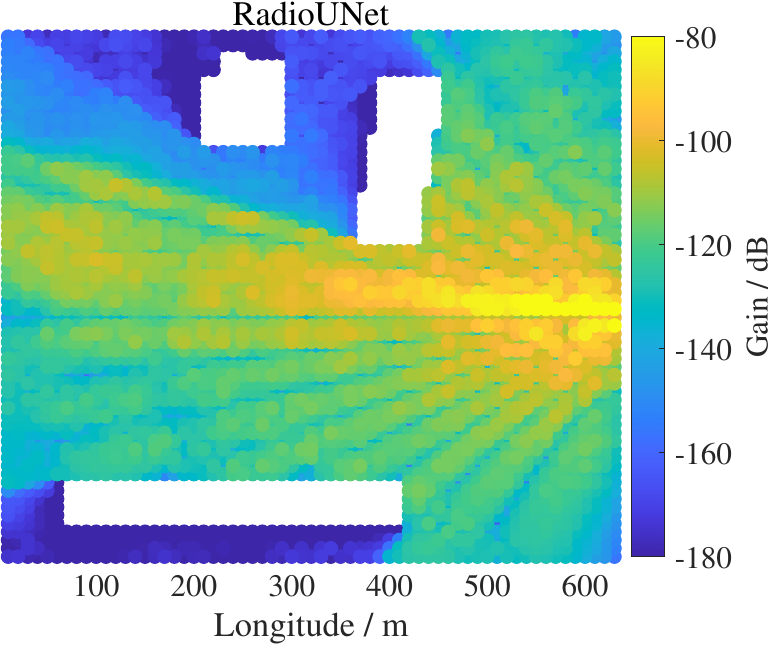}}

\centering \subfigure[Case II]{\label{fig:Beam Extrapolation case 2}\includegraphics[scale=0.315]{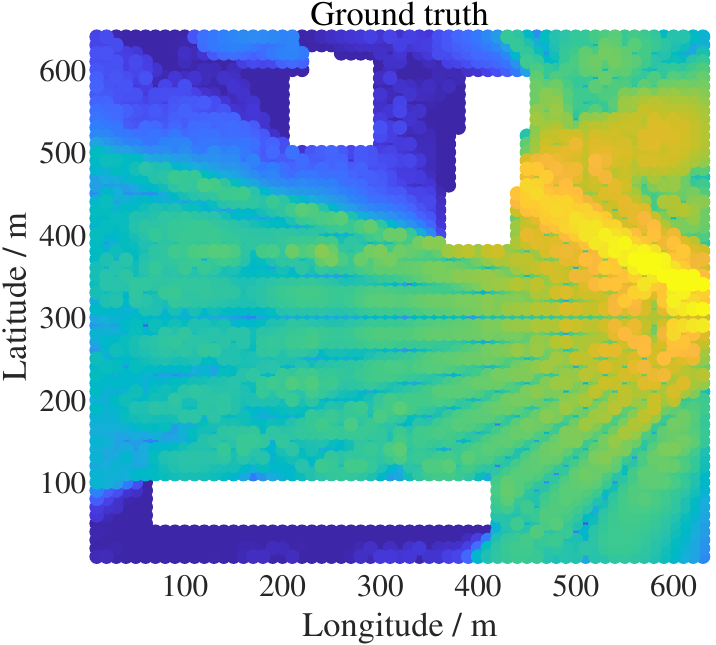}\includegraphics[scale=0.315]{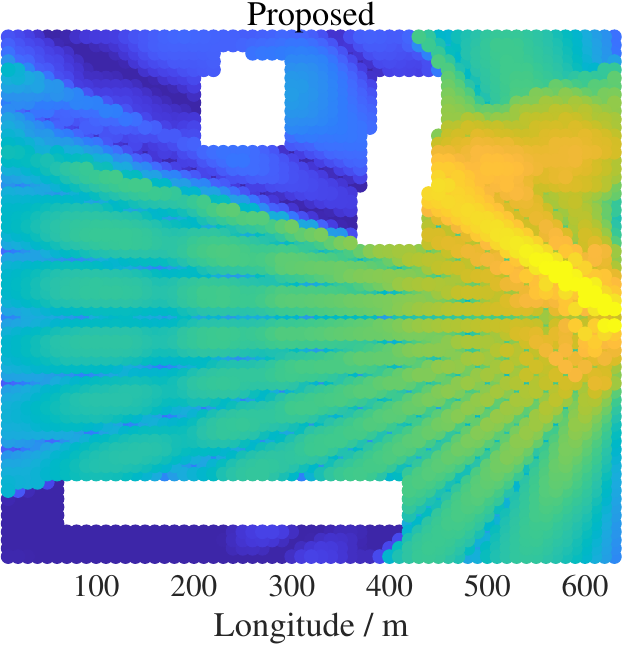}\includegraphics[scale=0.315]{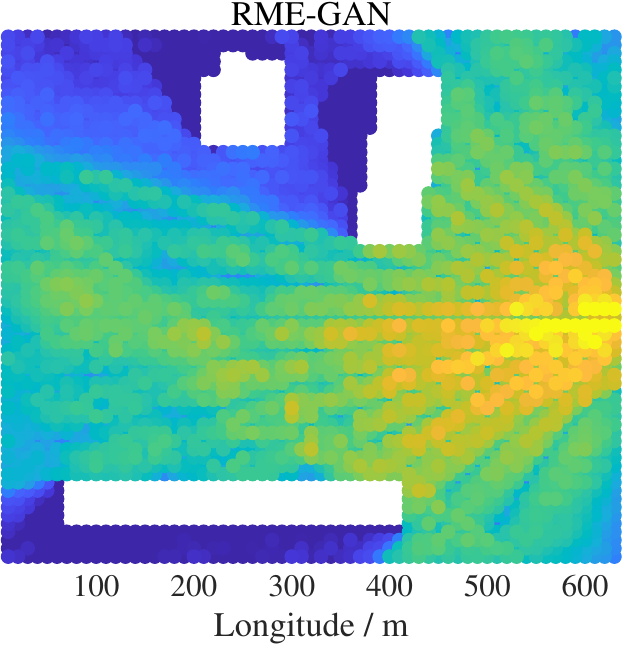}\includegraphics[scale=0.315]{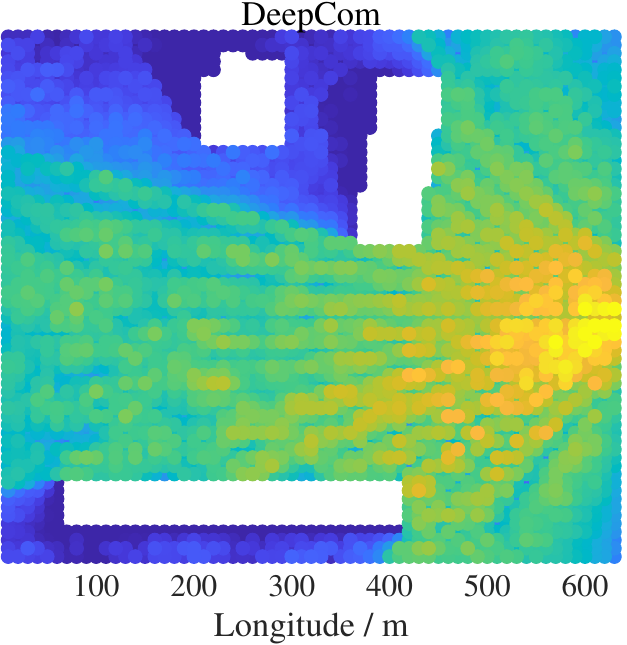}\includegraphics[scale=0.315]{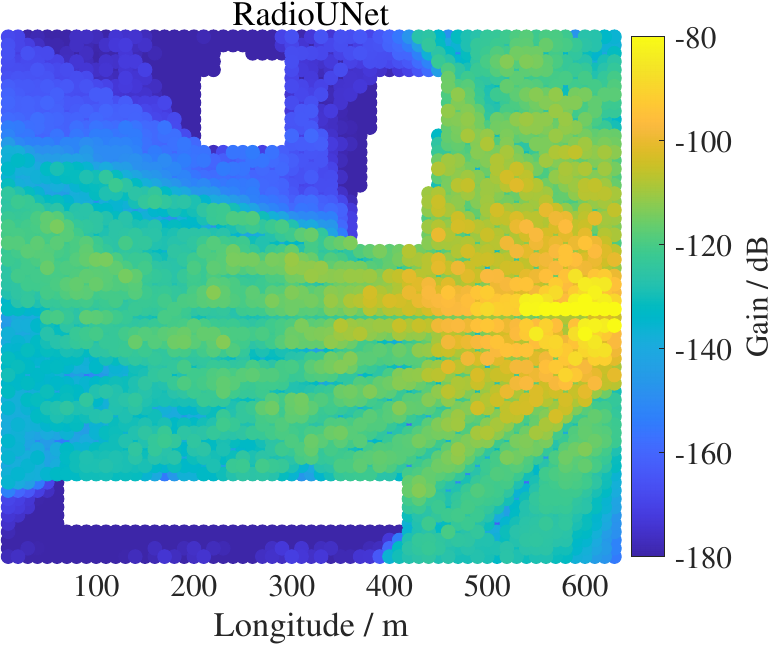}}\caption{MIMO beam extrapolation: a) Trained on odd-numbered beams and tested
on beam No. 6; b) Trained on beams 1-10, and tested on beam No. 11
and 14.}
\label{fig:Example Beam Extrapolation}
\end{figure*}

\subsection{Performance of New Beam Extrapolation}

In this section, we demonstrate that the proposed model can extrapolate
new beams, with its capability to predict beams for previously unseen
scenarios beyond the training data.

To this end, 10 beam maps of a specific \ac{tx} are used for training,
while the remaining maps are reserved for prediction. We consider
two experimental cases: I) training and test beam maps are selected
alternately (e.g., beams 1, 3, 5 for training and 2, 4, 6 for testing);
II) the first 10 beam maps are directly used for training, and the
rest for testing. As a result, in Case~I, adjacent beams in the training
and test sets may exhibit partial spatial overlap, while in Case~II,
such overlap occurs only for the first test beam. We further assume
the environment is known, such that our model can capture reflection
features from unknown obstacles and previously unseen beam directions,
as these can be learned from other \acpl{tx}. Note that in Case~II,
we omit the scattering branch from the proposed model and evaluate
its extrapolation capability only for the direct and reflection branches.

The beam extrapolation results are shown in Fig.~\ref{fig:Example Beam Extrapolation}.
In Case~I, the proposed model can generate previously unseen beams,
with the predictions closely matching the simulated ones, while the
baselines perform poorly. Specifically, DeepCom fails to generate
clear beam patterns, and both RME-GAN and RadioUNet incorrectly predict
the target beam as a different known beam, likely approximating the
unseen beam by selecting the closest match from the training set.
Similar outcomes are observed for the first extrapolated beam in Case~II,
as shown in the first row of Fig.~\ref{fig:Beam Extrapolation case 2}.
For the other beams, all baselines fail to extrapolate completely
unseen beams with no spatial overlap, revealing their limited generalization
beyond the training data. In contrast, the proposed model successfully
extrapolates these beams, demonstrating superior extrapolation capability.

\begin{figure}[!t]
\centering \subfigure[]{\includegraphics[scale=0.32]{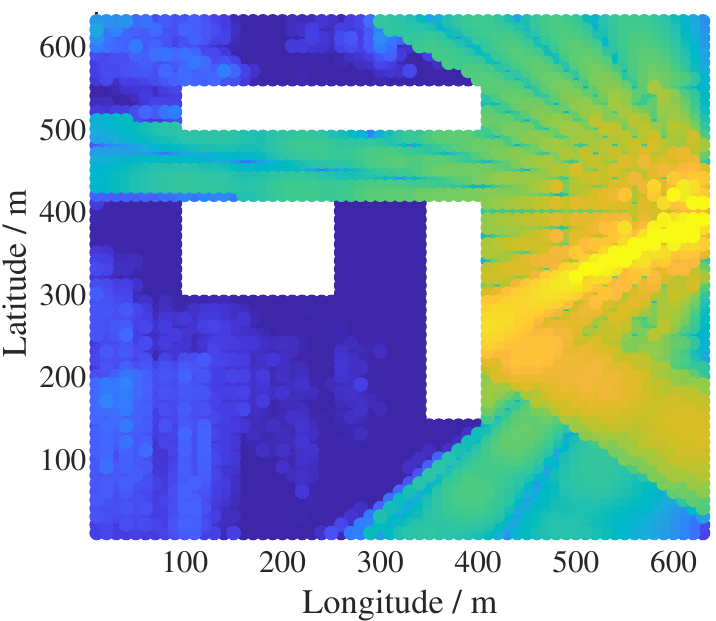}}\hspace{0.5cm}\subfigure[]{\includegraphics[scale=0.32]{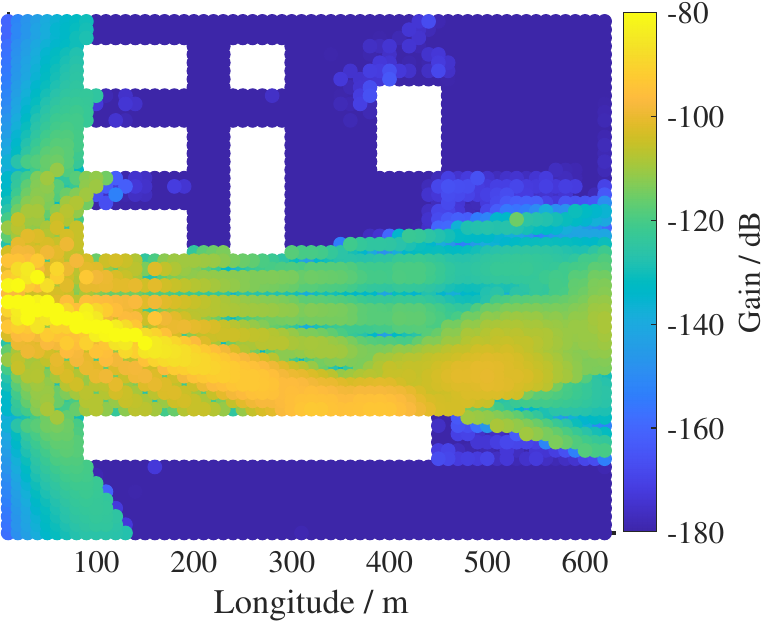}}

\subfigure[]{\includegraphics[scale=0.32]{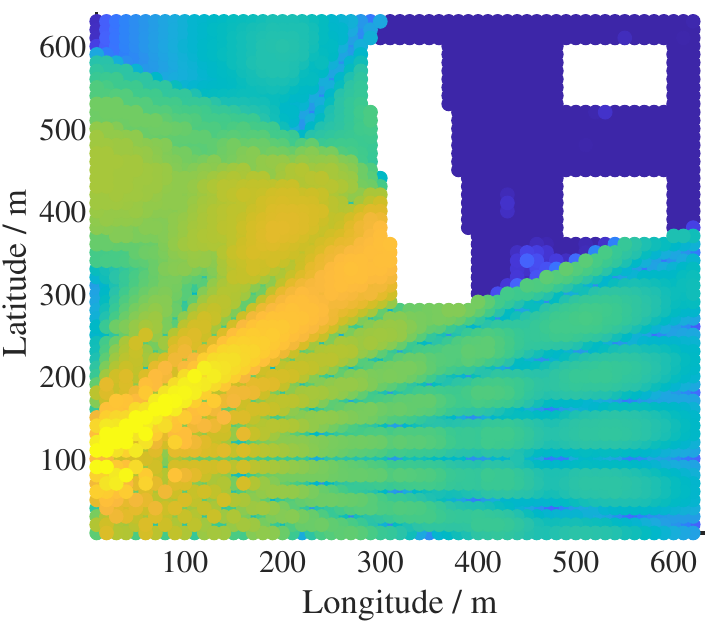}}\hspace{0.5cm}\subfigure[]{\includegraphics[scale=0.32]{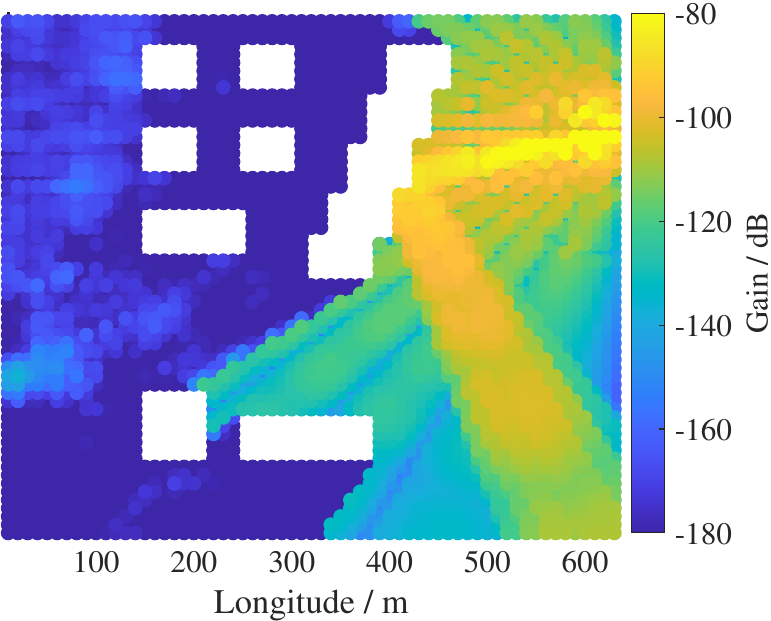}}\caption{MIMO beam maps generated by the proposed model transferred to a new
environment with the knowledge of the oriented virtual obstacles $\mathbf{V}$,
where the trained propagation parameters are fixed.}
\label{fig:RadioMap-Example-NewEnv}
\end{figure}

\subsection{Transferability to a New Environment}

We demonstrate the transferability of the proposed model to a new
environment. The geometric parameters of the new environment will
be input to initialize the virtual environment parameters $\mathbf{V}$,
thereby adapting the model to the new context. 

Fig.~\ref{fig:RadioMap-Example-NewEnv} shows the beam maps generated
by the proposed model across 4 distinct environments with varying
TX positions and obstacle configurations. It is observed that the
proposed model accurately predicts beam patterns in previously unseen
environments based on their geometric parameters, TX locations and
beam angles. The relationship between incident beams, reflections,
and obstacles remains consistent and predictable. Notably, the model
generalizes well across varying TX positions, indicating that it learns
relative spatial features rather than memorizing absolute locations.
Consequently, as long as environment parameters are available or pre-estimated,
the model can adapt to diverse scenarios without retraining, ensuring
efficiency and scalability for deployment.

\subsection{Application in Site-Specific Beam Alignment}

This section discusses the advantages of recognizing blockage and
reflection for online beam alignment.

We consider to select an optimal beamforming vector $\mathbf{w}_{j}$
from a predefined codebook $\mathcal{W}$ at the TX to align the beam
toward a specific RX to maximize the communication quality. The beam
alignment problem is to find the beam index that maximizes the \ac{snr}:
\begin{align}
j^{*}=\mathop{\mbox{argmax}}\limits _{j\in\{1,.,|\mathcal{W}|\}}\frac{|\mathbf{h}_{\mathrm{d}}(\tilde{\mathbf{p}})\mathbf{w}_{j}|^{2}P_{s}}{\sigma^{2}}\label{eq:beam-selection-problem}
\end{align}
where $P_{s}$ denotes the transmitted power and $\sigma^{2}$ is
the noise power. Here $\mathbf{h}_{\mathrm{d}}(\tilde{\mathbf{p}})=\mathbf{h}(\tilde{\mathbf{p}})+\mathbf{h}^{\epsilon}$
represents the dynamic channel, where $\mathbf{h}^{\epsilon}$ captures
uncertainty due to environmental dynamics (e.g., moving scatterers).
In this problem, beam sweeping is employed to identify the optimal
beam index $j^{*}$ from the codebook $\mathcal{W}$.

We propose an environment-aware beam sweeping strategy to reduce search
overhead by leveraging learned environmental geometry. Specifically,
the beam is steered towards the RX in LOS regions, and in NLOS regions,
the search is guided by valid reflected paths, which are summarized
as follows:
\begin{itemize}
\item If $\mathbb{I}\{\tilde{\mathbf{p}}\in\mathcal{\tilde{D}}_{0}(\mathbf{V}_{\textrm{H}})\}$,
then $j^{*}=\textrm{argmin}{}_{j}|\varphi(\mathbf{p}_{\mathrm{t}},\mathbf{p}_{\mathrm{r}})-\theta_{j}|$
and no probing overhead is required.
\item If $\mathbb{I}\{\tilde{\mathbf{p}}\in\mathcal{\tilde{D}}_{1}(\mathbf{V}_{\textrm{H}})\}$,
the beam directions are searched based on valid reflected paths that
satisfy both $\varphi(\mathbf{p}_{\mathrm{t}},\mathbf{c}_{m})\in[\theta_{j}\pm\Theta_{j}/2]$
and $\mathbb{I}\{\mathbf{p}_{\mathrm{r}}\in\tilde{\mathcal{Q}}_{m}(\mathbf{p}_{\mathrm{t}},\mathbf{V})\}$,
as verified by equation (\ref{eq:Ref_Zone_NN}).
\item If $\mathbb{I}\{\tilde{\mathbf{p}}\in\mathcal{\tilde{D}}_{1}(\mathbf{V}_{\textrm{H}})\}$
and no valid reflection exists, all beam directions are searched.
\end{itemize}

The following benchmarks are considered: 1) Exhaustive search: the
best beam is chosen by sweeping all candidates; 2) Hierarchical search
\cite{XiaoHe:J16}: it repeatedly splits the space into two partitions
and refines beam selection from wide to narrow beams; 3) Probing based
method \cite{HenMo:J22}: it jointly learns a compact probing codebook
with channel $\mathbf{h}$ and predict the beam index via a classifier.
We evaluate the beam alignment performance under $P_{s}=30$ dBm and
20 dBm, respectively, with the noise power $\sigma^{2}$ set to -110
dBm.

Table~\ref{Tab:beam alignment} summarizes the average \ac{snr}
across 100 RXs and the number of probing steps required by each method.
Here, the probing step refers to the number of beam directions explored
during the alignment process. It is observed that our map assisted
approach achieves an SNR comparable to that of the exhaustive search
while reducing the number of probing steps by $78\%$. Compared to
hierarchical search, it improves the SNR by 6 dB and requires only
half the number of probing steps. Although the probing based method
reduces the probing steps and achieves higher SNR when $P_{s}=30$
dBm, its performance deteriorates significantly when $P_{s}$ is reduced
to 20\,dBm.

\begin{table}
\caption{Application in Beam Alignment}

\centering\label{Tab:beam alignment} \renewcommand\arraystretch{1.5}
\begin{tabular}{p{2.65cm}<{\centering}|p{0.80cm}<{\centering}|p{1.15cm}<{\centering}|p{0.80cm}<{\centering}|p{1.15cm}<{\centering}}   
\hline   

\multirow{2}{*}{Scheme} & \multicolumn{2}{c|}{$P_{s}$ = 30 dBm} & \multicolumn{2}{c}{$P_{s}$ = 20 dBm} \\  
\cline{2-5}
                         & Steps & SNR (dB) & Steps & SNR (dB)  \\
\hline   
Exhaustive search         & 1600    & \textbf{26.79}   & 1600 & \textbf{15.26}    \\   
\hline   
Hierarchical search       & 800     & 19.10   & 800 & 9.17    \\ 
\hline   
Probing based method      & 400     & 21.33   & 400 & 4.79    \\ 
\hline   
Map assisted (ours)       & \textbf{352}  & \textbf{25.60}   & \textbf{352} & \textbf{14.93}   \\
\hline   
\end{tabular} 
\end{table}

\section{Conclusion}

This paper developed a physics-informed neural network to learn the
geometric features of the virtual environment for environment-aware
MIMO beam map construction. In contrast to many existing deep learning
approaches that lack an explicit environment model, an oriented virtual
obstacle model was proposed to capture the environmental geometry
by modeling signal behaviors of blockage and reflection. To characterize
reflective propagation, we formulated the concept of the reflective
zone, derived its expression, and reformulated it for improved compatibility
with deep learning representations. By integrating this reflective-zone-based
geometry model, a deep neural network architecture was designed to
jointly learn the blockage, reflection and scattering components,
as well as the beam pattern. Design examples showed a great superiority
of the proposed model with over 30\% improvement in beam map accuracy,
along with strong extrapolation to unseen beams and transferability
to new environments.

\begin{appendices}

\section{Proof of Lemma 1}\label{app:Lemma 1 proof}

\begin{figure}[H]
\centering\includegraphics[scale=0.35]{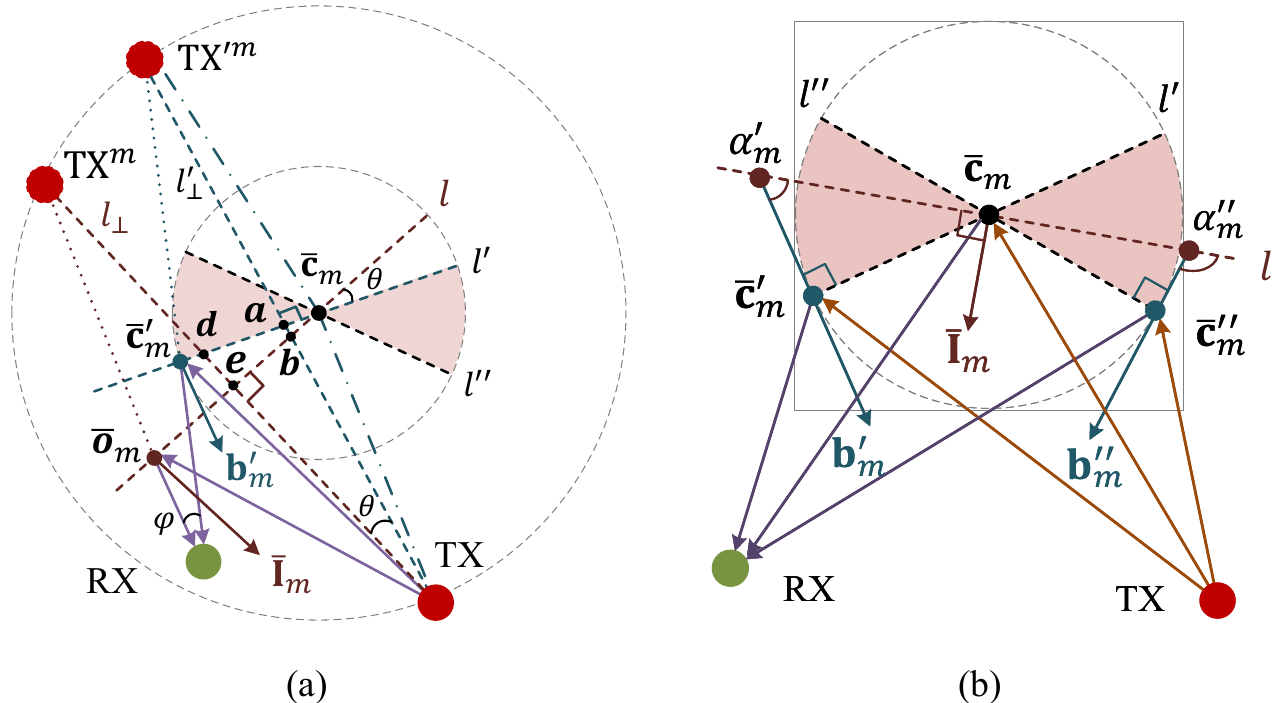}

\caption{The geometry of orientation conditions for the reflective zone.}
\label{fig:Proof}
\end{figure}
The proof of Lemma~\ref{thm:Lemma} proceeds in two steps. First,
we show that if line $l$ does not lie in the plane spanned by $l'$
and $l''$, the point $\bar{\mathbf{o}}_{m}$ falls outside the valid
circular region of $l$, violating the law of reflection. Second,
we prove that if $\bar{\mathbf{o}}_{m}$ lies on $l$ within the circular
region, the lemma holds.

Consider the case where $\bar{\mathbf{o}}_{m}$ lies to the left of
the center $\mathbf{\bar{c}}{}_{m}$. As shown in Fig.~\ref{fig:Proof}(a),
line $l$ is formed by rotating $l'$ counterclockwise by an angle
$\theta$, placing it outside the region bounded by $l'$ and $l''$.
Let TX$^{m}$ and TX$'^{m}$ denote the virtual TXs associated with
$l$ and $l'$, located at $\mathbf{\bar{p}}_{\mathrm{t}}^{m}$ and
$\mathbf{\bar{p}}_{\mathrm{t}}'^{m}$, respectively. Geometrically,
triangles $\triangle\boldsymbol{a}\mathbf{\bar{p}}_{\mathrm{t}}'^{m}\mathbf{\bar{c}}{}_{m}$
and $\triangle\boldsymbol{a}\mathbf{\bar{p}}_{\mathrm{t}}\mathbf{\bar{c}}{}_{m}$
are congruent, implying that $||\mathbf{\bar{p}}_{\mathrm{t}}'^{m}-\mathbf{\bar{c}}{}_{m}||=||\mathbf{\bar{p}}_{\mathrm{t}}-\mathbf{\bar{c}}{}_{m}||$.
Hence, the TX, TX$^{m}$ and TX$'^{m}$ all lie on the same outer
circle centered at $\mathbf{\bar{c}}{}_{m}$ with radius $||\mathbf{\bar{p}}_{\mathrm{t}}-\mathbf{\bar{c}}{}_{m}||$.
As line $l$ rotates, the virtual TX traces a circular arc. In addition,
from the similarity of $\triangle\boldsymbol{a}\boldsymbol{b}\mathbf{\bar{c}}{}_{m}$
and $\triangle\boldsymbol{e}\boldsymbol{b}\mathbf{\bar{p}}_{\mathrm{t}}$,
it follows that $\angle\mathbf{\bar{p}}_{\mathrm{t}}'^{m}\mathbf{\bar{p}}_{\mathrm{t}}\mathbf{\bar{p}}_{\mathrm{t}}^{m}=\theta$.
As $\theta$ increases, TX$^{m}$ moves counterclockwise along the
outer circle, and the angle $\varphi$ between the paths $\overrightarrow{\mathbf{\bar{p}}_{\mathrm{t}}^{m}\mathbf{\bar{p}}_{\mathrm{r}}}$
and $\overrightarrow{\mathbf{\bar{p}}_{\mathrm{t}}'^{m}\mathbf{\bar{p}}_{\mathrm{r}}}$
also increases. Consequently, $\bar{\mathbf{o}}_{m}$ shifts further
from $\mathbf{\bar{c}}{}_{m}$, leading to $||\mathbf{\bar{o}}{}_{m}-\mathbf{\bar{c}}{}_{m}||\geq||\mathbf{\bar{c}}{}_{m}'-\mathbf{\bar{c}}{}_{m}||$,
which implies $\mathbf{\bar{o}}{}_{m}$ lies outside the grid cell,
violating the orientation condition in (\ref{eq:reflective_zone_model}).
Thus, line $l$ must lie within the shadow region spanned by $l'$
and $l''$.

We further verify the conclusion by showing that when a reflected
path exists, the line $l$ lies within the region bounded by $l'$
and $l''$. In Fig.~\ref{fig:Proof}(b), extend the vectors $\mathbf{b}_{m}'$
and $\mathbf{b}_{m}''$ from points $\mathbf{\bar{c}}{}_{m}'$ and
$\mathbf{\bar{c}}{}_{m}''$ in the opposite direction to intersect
line $l$ at points $\mathbf{\boldsymbol{\alpha}}_{m}'$ and $\mathbf{\boldsymbol{\alpha}}_{m}''$,
respectively. By the triangle angle sum property, we obtain $\angle(\mathbf{b}_{m}',\overrightarrow{\mathbf{\boldsymbol{\alpha}}_{m}'\mathbf{\bar{c}}{}_{m}})<\pi/2$
and $\angle(\mathbf{b}_{m}'',\overrightarrow{\mathbf{\bar{c}}{}_{m}\mathbf{\boldsymbol{\alpha}}_{m}''})>\pi/2$.
Given $\angle(\bar{\mathbf{l}}_{m},\overrightarrow{\mathbf{\bar{c}}{}_{m}\mathbf{\boldsymbol{\alpha}}_{m}''})=\pi/2$,
it follows that $\angle(\mathbf{b}_{m}'',\overrightarrow{\mathbf{\bar{c}}{}_{m}\mathbf{\boldsymbol{\alpha}}_{m}''})-$
$\angle(\mathbf{b}_{m}',\overrightarrow{\mathbf{\boldsymbol{\alpha}}_{m}'\mathbf{\bar{c}}{}_{m}})=\angle(\mathbf{b}_{m}'',\overrightarrow{\mathbf{\bar{c}}{}_{m}\mathbf{\boldsymbol{\alpha}}_{m}''})-\angle(\bar{\mathbf{l}}_{m},\overrightarrow{\mathbf{\bar{c}}{}_{m}\mathbf{\boldsymbol{\alpha}}_{m}''})+\angle(\bar{\mathbf{l}}_{m},\overrightarrow{\mathbf{\boldsymbol{\alpha}}_{m}'\mathbf{\bar{c}}{}_{m}})-\angle(\mathbf{b}_{m}',\overrightarrow{\mathbf{\boldsymbol{\alpha}}_{m}'\mathbf{\bar{c}}{}_{m}})$,
which yields the final result $\angle(\mathbf{b}_{m}',\mathbf{b}_{m}'')=\angle(\mathbf{b}_{m}',\bar{\mathbf{l}}_{m})+\angle(\mathbf{b}_{m}'',\bar{\mathbf{l}}_{m})$.

\section{Proof of Proposition 1}\label{app:Proposition 1 proof}

According to Lemma~\ref{thm:Lemma}, the orientation conditions in
the first two terms of (\ref{eq:reflective_zone_model}) can be reformulated
as $\prod_{i=\{1,2\}}\mathbb{I}\{\left|o_{m,i}-c_{m,i}\right|\leq D/2\}\hspace{0.1cm}\mathbb{I}\{\left(\mathbf{l}_{m}\cdot(\mathbf{c}_{m}-\mathbf{p}_{\mathrm{r}})\right)/(\mathbf{l}_{m}\cdot(\mathbf{p}_{\textrm{t}}^{m}-\mathbf{p}_{\mathrm{r}}))\}=\mathbb{I}\{\angle(\mathbf{b}_{m}',\bar{\mathbf{l}}_{m})+\angle(\mathbf{b}_{m}'',\bar{\mathbf{l}}_{m})=\angle(\mathbf{b}_{m}',\mathbf{b}_{m}'')\}$.
Here we omit $\tilde{\mathbf{p}}$ for simplicity. On the right-hand
side, since both $\angle(\mathbf{\widehat{b}}_{m},\mathbf{b}_{m}')$
and $\angle(\mathbf{\widehat{b}}_{m},\bar{\mathbf{l}}_{m})$ are acute,
it follows that $\angle(\mathbf{\widehat{b}}_{m},\bar{\mathbf{l}}_{m})\leq\angle(\mathbf{\widehat{b}}_{m},\mathbf{b}_{m}')$,
and thus, $\cos\angle(\mathbf{\widehat{b}}_{m},\mathbf{l}_{m})\geq\cos\angle(\mathbf{\widehat{b}}_{m},\mathbf{b}_{m}')$.
This is equivalent to $\bar{\mathbf{l}}_{m}\cdot\mathbf{\widehat{b}}_{m}/(||\bar{\mathbf{l}}_{m}||\hspace{0.05cm}||\mathbf{\widehat{b}}_{m}||)\geq\mathbf{b}_{m}'\cdot\mathbf{\widehat{b}}_{m}/(||\mathbf{b}_{m}'||\hspace{0.05cm}||\mathbf{\widehat{b}}_{m}||$.
Since $\mathbf{\widehat{b}}_{m}$ and $\mathbf{b}_{m}'$ are known
a priori, we normalize them to unit vectors, which confirms the first
term. The second term follows directly from (\ref{eq:Height Condition}).

\end{appendices}

\bibliographystyle{IEEEtran}
\bibliography{StringDefinitions,JCgroup,ChenBibCV,Bib/Reference,Bib/RadioMap,Bib/IEEEabrv,Bib/Reference_MIMO}

\end{document}